\documentclass[onecolumn,10pt]{article}
\usepackage{amsfonts}
\usepackage{bbm}
\usepackage{authblk}

\usepackage{geometry}
\usepackage{amsmath}
\usepackage{amssymb}
\usepackage{amsthm}
\usepackage{graphicx}
\usepackage{subfigure}
\usepackage{rotating}
\usepackage{lscape}
\usepackage{paralist}
\usepackage{mathrsfs}
\usepackage{lettrine}
\usepackage[rm,small,compact]{titlesec}
\usepackage{bibentry}
\usepackage[round,authoryear]{natbib}
\usepackage{abstract}
\usepackage{mathptmx}
\usepackage[scaled=0.92]{helvet}
\usepackage{courier}
\usepackage[T1]{fontenc}
\usepackage{enumitem}
\usepackage{titling}
\usepackage{mathtools}
\usepackage{times}
\usepackage{color}
\usepackage{lineno}
\usepackage{longtable}
\usepackage[varg]{txfonts}
\usepackage[titletoc,title]{appendix}
\usepackage{yfonts}
\usepackage{arydshln}
\usepackage{dsfont}

\theoremstyle{plain}
\theoremstyle{definition}
\newtheorem{definition}{\textsc{Definition}}[section]
\newtheorem{lemma}{\textsc{Lemma}}[section]
\newtheorem{theorem}{\textsc{Theorem}}[section]

\newtheorem*{theorem*}{\textsc{Theorem}}

\theoremstyle{definition}

\newtheorem*{assertion*}{\textsc{Assertion}}

\newlist{Axiom}{enumerate}{1}
\setlist[Axiom]{label=Axiom A\arabic*.}

\makeatletter
\renewenvironment{proof}[1][\proofname]{\par
\vspace{-10pt}
    \pushQED{\qed}%
    \normalfont \partopsep=\z@skip \topsep=\z@skip
  \trivlist
  \item[\hskip\labelsep
        \itshape
    #1\@addpunct{.}]\ignorespaces
}{%
    \popQED\endtrivlist\@endpefalse
} \makeatother

 \makeatletter
    \renewcommand{\thefigure}{\ifnum \c@section>\z@ \thesection.\fi \@arabic\c@figure}
    \renewcommand{\thetable}{\ifnum \c@section>\z@ \thesection.\fi \@arabic\c@table}
    \makeatother

\renewcommand{\proofname}{\textsc{Proof}.}

\makeatletter 
\@addtoreset{equation}{section}
\makeatother  

\makeatletter

\newcommand{\Rmnum}[1]{\expandafter\@slowromancap\romannumeral #1@}
\makeatother

\makeatletter
\newcommand*\bigcdot{\mathpalette\bigcdot@{.5}}
\newcommand*\bigcdot@[2]{\mathbin{\vcenter{\hbox{\scalebox{#2}{$\m@th#1\bullet$}}}}}
\makeatother

\DeclareSymbolFont{euex}{U}{euex}{m}{n}
\DeclareMathSymbol{\varint}{\mathop}{euex}{"52}
\DeclareMathAlphabet{\mathpzc}{OT1}{pzc}{m}{it}

\setenumerate[1]{itemsep=0pt,partopsep=0pt,parsep=\parskip,topsep=1pt}

\begin{document}

\setlength{\abovedisplayshortskip}{5pt}
\setlength{\belowdisplayshortskip}{5pt}
\setlength{\abovedisplayskip}{5pt}
\setlength{\belowdisplayskip}{5pt}

\title{\bf \LARGE{Discounted Expected Utility: A Revealed Preference Analysis}}

\author{Wei Ma\thanks{Corresponding author: Center for Economic Research, Shandong University, Jinan, 250100, China. Email: wei.ma@sdu.edu.cn.}}

\date{}
 \maketitle
\begin{onecolabstract}\noindent
We present a revealed preference characterization of the discounted expected utility model with a concave utility function. The characterization offers a nonparametric test of the model. We apply the test to an experimental data set in the literature and find that the model is almost always rejected even when all payments involved are subject to risk.
\end{onecolabstract}

\section{Introduction}\label{section:Introduction}
In economics, time and risk are a pair of fundamental concepts that are closely intertwined. The canonical theory guiding decision making under risk and over time is the discounted expected utility (DEU) model. Although subjected to various behavioral anomalies, it remains the workhorse in macroeconomics and finance. The empirical content of DEU, however, has not yet been fully appreciated for market behavior. In this paper, we study this question and present a necessary and sufficient revealed-preference condition for a finite set of price-quantity pairs to be consistent with DEU.

We consider a setting with a finite number of discrete time periods and that there are the same number of states of nature in each period. The probability of the occurrence of each state is also the same across the time periods. There is one commodity in each state. Let $x_{st}$ be the consumption of this commodity in state $s$ and period $t$. We assume an individual has a concave (von Neumann-Morgenstern) utility function $u$ and a discount factor $\beta$. Given a consumption stream $\mathbf{x}=(x_{st})$, its DEU is given by
$$U(\mathbf{x})=\sum_{s, t}\beta^t\pi_s u(x_{st}),$$
where $\pi_s$ denotes the probability of state $s$. Let $p_{st}$ be the price of the commodity in state $s$ and period $t$ and $\mathbf{p}=(p_{st})$ a vector of such prices. Suppose that we have a set of observations $(\mathbf{p}^k, \mathbf{x}^k)$, which means the individual chooses the consumption vector $\mathbf{x}^k$ from his budget set when faced with the price vector $\mathbf{p}^k$. Under what condition are these observations consistent with DEU maximization?

The above setting contains two well-known special cases. When there is only one state of nature in each period, it reduces to the exponentially discounted utility (EDU) model; and when there is only one time period, it becomes the subjective expected utility (SEU) model. Revealed-preference characterizations of the two models have been provided by  \cite{Echenique2015} and \cite{Echenique2020}. They both take the form of  the \lq\lq downward sloping demand\rq\rq property subject to some qualifications. For instance, the characterization of EDU is to the effect that if consumption is higher in some later period than that in an earlier period, then the price of the commodity in the later period must be lower. The main result of this paper is that DEU can also be characterized by the \lq\lq downward sloping demand\rq\rq property with proper qualifications. 

In this paper, we distinguish between two scenarios. In the first scenario, the probability distribution on the states of nature is assumed to be known while the discount factor is unknown. This is the setup employed in many experimental investigations of DEU \citep[see, e.g., ][]{Andreoni2012a, Lanier2024}. It also accommodates the EDU model,  for which the state of each period occurs with probability one.  In the second scenario, neither the probability distribution nor the discount factor is known. Compared with the first scenario, the latter one appears more realistic.

Both  \cite{Echenique2015} and \cite{Echenique2020} establish their results using proof by contradiction and the Theorem of the Alternative. Here we demonstrate our result by a somewhat more direct argument. Note that if $\beta$ and $\pi_s$ were known, DEU amounts to objective expected utility (OEU), whose revealed-preference characterization has been given by \cite{Kubler2014} and \citet[][Supplement]{Echenique2015}. Our idea is to explicitly construct $\beta$ and $\pi_s$ from the observations and then appeal to the characterization of OEU. To illustrate, consider the first scenario and see what information on the discount factor can be inferred from the observations. Suppose that in some observation $(\mathbf{p}, \mathbf{x})$, we have $x_{st}>x_{s't'}$ with $t>t'$. The first-order condition for DEU maximization and the concavity of $u$ imply
 $$\frac{u'(x_{st})}{u'(x_{s't'})}=\beta^{t'-t}\cdot \frac{p_{st}/\pi_s}{p_{st'}/\pi_{s'}}<1,\text{ so that }\frac{p_{st}/\pi_s}{p_{st'}/\pi_{s'}}<\beta^{t-t'}.$$
Therefore, the risk-adjusted price ratio on the left-hand side of the second inequality offers a lower bound for the discount factor. By taking $\beta$ to be the supremum of a properly defined set of  risk-adjusted price ratios, we demonstrate that a data set is consistent with DEU if and only if it is consistent with OEU in which the probability of state $s$ in period $t$ is given by $\beta^t\pi_s$. 

Our revealed-preference characterization of DEU can be used as a nonparametric test of the model. An an illustration, we apply the test to an experiment data set of \citeauthor{Andreoni2012a} (2012---henceforth AS). The same test can also be implemented on the other data sets including those of  \cite{Miao2015} and \cite{Lanier2024}. AS's data set contains $80$ subjects who have to make $84$ experimental decisions. In some decisions, all payments are certain while in some others, all payment are risky. We first apply the test to the whole data set of AS and find that only one the subjects conforms with DEU. We then consider two subsets. In the first subset, all payments involved are certain, and in the second, all payment involved are under risk. We find that of $85\%$of  the subjects conforms with DEU (or, more precisely, EDU) in the first case, but only $8.75\%$ of them do in the latter case.

\emph{Related Literature}.---This paper contributes to the literature on the revealed preference analysis of various behavioral models. For the most general utility maximization model, \cite{Samuelson1938} proposes the weak axiom of revealed preference. It is necessary but not sufficient for a data set to be consistent with the utility maximization hypothesis except there are two commodities in total. When the demand is a function, \cite{Houthakker1950} shows that a data set is consistent with the utility maximization hypothesis if and only if it satisfies the strong axiom of revealed preference. And when the demand is not a function but a correspondence, the empirical content of the utility maximization hypothesis is exhausted by the generalized axiom of revealed preference \citep{Afriat1967, Varian1982}.

Beyond the most general utility maximization model, revealed preference theory turns to characterizing some special behavioral models, most notably the models governing decision making under uncertainty and over time. \cite{Varian1983} presents a  characterization of the OEU model when the market is incomplete. For complete markets , \cite{Green1986} characterize the same model by a set of Afriat's inequalities. When there is one commodity in each state, \cite{Kubler2014} show that the model can be characterized by a revealed-preference condition. For SEU models with one commodity in each state, a similar condition is obtained by  \cite{Echenique2015}. Assuming a linear von Neumann-Morgenstern (vNM) utility function, \cite{Chambers2016a} provide a revealed-preference characterization of, among other models,  the variational and maxmin utilities. When the utility function is concave, \cite{Bayer2013} characterize the same two models by a system of nonlinear inequalities. Because of the nonlinearity, the system is very challenging to verify. Using Polymatroid theory, \cite{Demuynck2024a} develop a revealed-preference test for the Choquet expected utility model with ambiguity aversion, which can be implemented very efficiently when the number of states of nature is not too large. When the utility function is not necessarily concave, \cite{Polisson2020} characterize various models under risk and uncertainty, including expected utility, rand-dependent utility, and maxmin utility. 

For models of decision over time, when there is only one observation, \cite{Browning1989} furnishes a revealed-preference characterization of EDU with the discount factor equal to one, and  \cite{Crawford2010}  the habit formation model through a system of Afriat inequalities. When multiple observations are available, \cite{Echenique2020} characterize the EDU models and some of its generalizations including the quasi-hyperbolic discounting model. In the test of the latter model, \citeauthor{Echenique2020} assume the subjects are committed. \cite{Blow2021} and \cite{Echenique2023a} discuss revealed preference analysis of the model without commitment. For DEU, \cite{Lanier2024} conduct an experiment to elicit preferences over state-contingent timed payouts and find strong evidence against one of central properties of DEU---correlation neutrality. Their analysis does not assume concavity of the utility function. In this paper, by imposing the concavity assumption, we provide a full revealed-preference characterization of DEU.

The rest of the paper is organized as follows. Section~\ref{section:The Setup} presents the setup and precisely formulates the problem.  Section~\ref{section:Revealed Preference Characterization} provides a revealed preference characterization of DEU in both scenarios. In Section~\ref{section: Empirical Illustration}, we apply the revealed preference test to AS's experimental data set. Finally, we conclude in Section~\ref{section: Concluding Remarks} with some remarks.

\section{The Setup}\label{section:The Setup}
Suppose that there are a finite number of time periods, indexed by $t=0, 1, \ldots, T$. In each period, there are a finite number of states of nature, indexed by $s=1, \ldots, S$. For notational convenience, let $\mathds{S}=\{1, \ldots, S\}$ and $\mathds{T}=\{0, 1, \ldots, T\}$. In each state, there is a single commodity. Let $p_{st}$ be the price of the commodity in state $s$ and period $t$ and $x_{st}$ its quantity. The probability of each state is independent of the time periods. Let $\pi_s$ be the probability of state $s\in \mathds{S}$. Let $\mathbf{p}=(p_{st})_{s\in \mathds{S}, t\in \mathds{T}}$ and $\mathbf{x}=(x_{st})_{s\in \mathds{S}, t\in \mathds{T}}$. Assume that an individual is a DEU maximizer, whose discount factor is $\beta$ and (vNM) utility function is $u: \mathbb{R}_+\rightarrow  \mathbb{R}$. Then faced with a price vector $\mathbf{p}\in\mathbb{R}_{++}^{S(T+1)}$ and by normalizing his income to be unity, the individual seeks his consumption bundle by solving the problem
\begin{equation}\label{equation: OP}
\begin{cases}
  \max & \sum\limits_{t=0}^{T}\beta^t\sum\limits_{s=1}^{S}\pi_s u(x_{st}) \\
  \mbox{s. t.}, & \mathbf{p}\cdot \mathbf{x}\leq 1.
\end{cases}
\end{equation}

Suppose that we have a set of $K$ observations on the individual's choice behavior: $\mathcal{O}=\{(\mathbf{p}^k, \mathbf{x}^k)\}_{k\in \mathds{K}}$, where $\mathbf{p}^k\in\mathbb{R}_{++}^{S(T+1)}$, $\mathbf{x}^k\in \mathbb{R}_{+}^{S(T+1)}$, and $\mathds{K}=\{1, \ldots, K\}$. This means that for each $k\in \mathds{K}$, the individual chooses consumption bundle $\mathbf{x}^k$ from his budget set when faced with the price vector $\mathbf{p}^k$. Given the data set $\mathcal{O}$, how to determine whether the individual's behavior is consistent with DEU. To be more precise, we formally state the following definition.
\begin{definition}\label{definition: DEU rational}
The data set $\mathcal{O}$ is DEU rational if there exist a discount factor $\beta\in (0,1]$, a probability measure $\pi \in \mathbb{R}_{++}^{S}$ on $\mathds{S}$ and a strictly increasing and concave utility function $u: \mathbb{R}_{+}\rightarrow \mathbb{R}$ such that $ \mathbf{x}^k$ solves Problem~\eqref{equation: OP} when the price vector $\mathbf{p}=\mathbf{p}^k$ for all $k\in \mathds{K}$.
\end{definition}

For ease of exposition, we make the convention that all utility functions in what follows are meant to be strictly increasing and concave on $\mathbb{R}_{+}$. Sometimes to explicitly indicate the fundamentals $(\beta, \pi, u)$ that rationalize $\mathcal{O}$, we refer to a DEU rational data set as $(\beta, \pi, u)$-DEU rational. In this paper, we address the question: What properties must the data set $\mathcal{O}$ satisfy in order to be DEU rational? Noting that when $S=1$, DEU reduces to EDU, and when $T=1$, it reduces to SEU, DEU rational includes as special cases EDU rational \citep{Echenique2020} and SEU rational \citep{Echenique2015}.

\section{Revealed Preference Characterization}\label{section:Revealed Preference Characterization}
In this section, we present a revealed preference characterization of DEU rational data sets. We first study the case where the probability measure $\pi$ on $\mathds{S}$ is known. This setup is employed in many experimental studies of time and risk preferences \citep{Andreoni2012a, Lanier2024}. Then we proceed to examine the more general case where $\pi$ is unknown.

\subsection{The Case with Objective Probabilities}\label{section: The Case with Objective Probabilities}
Suppose the probability of the occurrence of state $s$ is known to be $\pi_s\in \mathbb{R}_{++}$, $s\in \mathds{S}$. Let $\pi=(\pi_1, \ldots, \pi_S)$. We call $\mathcal{O}$ objective DEU rational if  there exist a discount factor $\beta$ and a utility function $u$ such that it is $(\beta, \pi, u)$-DEU rational. Then under what condition is $\mathcal{O}$ objective DEU rational?

To answer this question, inspired by the ideas of \citet{Echenique2015} and \citet{Echenique2020}, we present the following definition.
\begin{definition}\label{definition: k balanced}
A sequence of pairs $\left(x_{s_it_i}^{k_i}, x_{s'_it'_i}^{k'_i}\right)_{i=1}^n$ is balanced if $x_{s_it_i}^{k_i}> x_{s'_it'_i}^{k'_i}$ for all $i=1, \ldots, n$ and if each $k$ appears as $k_i$ (on the left of the pair) the same number of times it appears as $k'_i$ (on the right of the pair).
\end{definition}

Now suppose that $\mathcal{O}$ is $(\beta, \pi, u)$-DEU rational. Assume for a moment that $u$ is differentiable. For each $k\in \mathds{K}$,  the first-order condition for \eqref{equation: OP} is then given by
\begin{equation}\label{equation: FOC}
\beta^t\pi_s u'(x_{st}^k)=\lambda^k p_{st}^k.
\end{equation}
To avoid cumbersome notation, let $\rho_{st}^k=p_{st}^k/\pi_s$; denote a generic sequence of pairs $\left(x_{s_it_i}^{k_i}, x_{s'_it'_i}^{k'_i}\right)_{i=1}^n$ by $\omega$ and let
\begin{equation}\label{equation: T0Po}
\mathpzc{T}_0(\omega)=\sum_{i=1}^{n}(t_i-t'_i)\text{ and }\mathpzc{P}_o(\omega)=\prod_{i=1}^n \frac{\rho_{s_it_i}^{k_i}}{\rho_{s'_it'_i}^{k'_i}},
\end{equation}
where the subscript $o$ in $\mathpzc{P}_o$ is short for \lq\lq objective.\rq\rq Since $u$ is concave, it follows that $x_{s_it_i}^{k_i}> x_{s'_it'_i}^{k'_i}$ implies $u'(x_{s_it_i}^{k_i})\leq u'(x_{s'_it'_i}^{k'_i})$. To facilitate understanding, consider a balanced sequence with two pairs $\omega=\left(x_{s_it_i}^{k_i}, x_{s'_it'_i}^{k'_i}\right)_{i=1}^2$.  By \eqref{equation: FOC} and the balancedness of $\omega$, we have
\begin{equation}\label{equation: necessary condition}
\frac{u'(x_{s_1t_1}^{k_1})}{u'(x_{s'_1t'_1}^{k'_1})}\cdot \frac{u'(x_{s_2t_2}^{k_2})}{u'(x_{s'_2t'_2}^{k'_2})}=\frac{\lambda^{k_1}\rho_{s_1t_1}^{k_1}/\beta^{t_1}}{\lambda^{k'_1}\rho_{s'_1t'_1}^{k'_1}/\beta^{t'_1}}\cdot \frac{\lambda^{k_2}\rho_{s_2t_2}^{k_2}/\beta^{t_2}}{\lambda^{k'_2}\rho_{s'_2t'_2}^{k'_2}/\beta^{t'_2}}=\beta^{-\mathpzc{T}_0(\omega)}\mathpzc{P}_o(\omega)\leq 1.
\end{equation}
Then if $\mathpzc{T}_0(\omega) \geq 0$, we get $\mathpzc{P}_o(\omega)\leq 1$ as $\beta\in (0,1]$. This is easily seen to be true for any balanced sequence $\omega$ with $\mathpzc{T}_0(\omega) \geq 0$. Formally, we state the following definition.
\begin{definition}\label{definition: SAR-ODEU}
The data set $\mathcal{O}$  satisfies the strong axiom of revealed objective discounted expected utility (SAR-ODEU) if $\mathpzc{P}_o(\omega)\leq 1$ for any balanced sequence $\omega$ with $\mathpzc{T}_0(\omega) \geq 0$.
\end{definition}

It turns out that SAR-ODEU is not only necessary but also sufficient for a data set to be objective DEU rational.
\begin{theorem}\label{theorem: SAR-ODEU}
The data set $\mathcal{O}$ is objective DEU rational if and only if it satisfies SAR-ODEU.
\end{theorem}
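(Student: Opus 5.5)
Necessity is the easy direction and follows the computation already sketched around \eqref{equation: necessary condition}. Suppose $\mathcal{O}$ is $(\beta,\pi,u)$-DEU rational. Because $u$ is concave but not necessarily differentiable, I would work with supergradients. At each optimum $\mathbf{x}^k$ there is $\lambda^k>0$ and a selection $\beta^t\pi_s\,\partial u(x^k_{st})\ni \lambda^k p^k_{st}$; strict monotonicity of $u$ together with the usual boundary argument gives interior multipliers. Concavity makes the supergradient correspondence monotone, so $x^{k_i}_{s_it_i}>x^{k'_i}_{s'_it'_i}$ implies that the selected supergradients satisfy $\le$. Now take the product over a balanced $\omega$. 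Balancedness cancels the $\lambda^k$ factors, which yields $\beta^{-\mathpzc{T}_0(\omega)}\mathpzc{P}_o(\omega)\le 1$. Since $\beta\in(0,1]$ and $\mathpzc{T}_0(\omega)\ge 0$, we have $\beta^{\mathpzc{T}_0(\omega)}\le 1$, and hence $\mathpzc{P}_o(\omega)\le 1$.

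For sufficiency I would follow the strategy announced in the introduction: construct $\beta$ explicitly from the data and then invoke the OEU characterization of \cite{Kubler2014} and \citet[Supplement]{Echenique2015}. Treat the $S(T+1)$ pairs $(s,t)$ as states with unnormalized probabilities $\mu_{st}=\beta^t\pi_s$. Normalizing does not matter, since it only rescales $u$. That characterization says the data are OEU rational with probabilities $\mu$ iff for every balanced $\omega$,
\[
\prod_{i=1}^n \frac{p^{k_i}_{s_it_i}/\mu_{s_it_i}}{p^{k'_i}_{s'_it'_i}/\mu_{s'_it'_i}} = \beta^{-\mathpzc{T}_0(\omega)}\mathpzc{P}_o(\omega)\le 1 .
\]
So the whole task reduces to finding $\beta\in(0,1]$ with $\mathpzc{P}_o(\omega)\le \beta^{\mathpzc{T}_0(\omega)}$ for every balanced $\omega$. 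For $\mathpzc{T}_0(\omega)\ge0$ this follows from SAR-ODEU whenever $\beta=1$ is allowed, but it is not automatic for general $\beta$. For $\mathpzc{T}_0(\omega)=0$ it is exactly SAR-ODEU. For $\mathpzc{T}_0(\omega)<0$ the constraint reads $\beta\le \mathpzc{P}_o(\omega)^{1/\mathpzc{T}_0(\omega)}$, an upper bound on $\beta$. For $\mathpzc{T}_0(\omega)>0$ it reads $\beta\ge \mathpzc{P}_o(\omega)^{1/\mathpzc{T}_0(\omega)}$, a lower bound. I would set
\[
\underline\beta=\sup\{\mathpzc{P}_o(\omega)^{1/\mathpzc{T}_0(\omega)}:\omega \text{ balanced},\ \mathpzc{T}_0(\omega)>0\},
\]
which is at most $1$ by SAR-ODEU. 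Then take $\beta=\underline\beta$, or $\beta=1$ if the set is empty; positivity of $\underline\beta$ needs a small argument or a default choice.

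The main obstacle is showing that every upper bound is compatible with every lower bound. That is, for balanced $\omega$ with $\mathpzc{T}_0(\omega)=a>0$ and $\omega'$ with $\mathpzc{T}_0(\omega')=-b<0$, we need $\mathpzc{P}_o(\omega)^{1/a}\le \mathpzc{P}_o(\omega')^{-1/b}$, i.e. $\mathpzc{P}_o(\omega)^b\mathpzc{P}_o(\omega')^a\le1$. My plan here is the concatenation trick. The sequence formed by repeating $\omega$ $b$ times and $\omega'$ $a$ times is still balanced, because balancedness is additive under concatenation. Its $\mathpzc{T}_0$ equals $ab-ab=0$, and its $\mathpzc{P}_o$ is $\mathpzc{P}_o(\omega)^b\mathpzc{P}_o(\omega')^a$. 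SAR-ODEU then gives the needed inequality.

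One subtlety remains: the supremum over infinitely many sequences might not be attained. This is harmless, since the inequalities are non-strict and closed. Each upper bound dominates every element of the set, hence dominates the supremum. The lower bounds hold by definition of the supremum. With this $\beta$, the OEU condition holds for all balanced $\omega$, so the OEU theorem delivers a concave, strictly increasing $u$. Then $(\beta,\pi,u)$ rationalizes $\mathcal{O}$ as a DEU model.
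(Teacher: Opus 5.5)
Your necessity argument (a direct supergradient version of what the paper obtains from the OEU lemma) is fine. When some balanced sequence has $\mathpzc{T}_0(\omega)>0$, your sufficiency argument is the paper's: take the supremum, use concatenation to show it respects every upper bound, then apply the OEU lemma. In that case positivity of $\underline\beta$ is immediate, since it is the supremum of a nonempty set of positive numbers.

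\textbf{The gap is your default $\beta=1$ when no balanced sequence has $\mathpzc{T}_0(\omega)>0$.} You still need $\mathpzc{P}_o(\omega')\le\beta^{\mathpzc{T}_0(\omega')}$ for every balanced $\omega'$ with $\mathpzc{T}_0(\omega')<0$. SAR-ODEU says nothing about such sequences, so $\mathpzc{P}_o(\omega')>1$ can occur, and then $\beta=1$ fails. For example, take $S=1$, $T=1$ and a single observation with $x_0>x_1$ and $p_0>p_1$. The only balanced sequences are $n$ copies of $(x_0,x_1)$, with $\mathpzc{T}_0=-n$ and $\mathpzc{P}_o=(p_0/p_1)^n>1$, so SAR-ODEU holds vacuously. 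The data are rationalized by any $\beta\le p_1/p_0<1$. They are not rationalized by $\beta=1$: the first-order conditions would make the ratio of marginal utilities at $x_0$ and $x_1$ equal to $p_0/p_1>1$, contradicting concavity.

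\textbf{The missing step is positivity of an infimum.} In this case you must take $\beta$ to be the infimum of $\mathpzc{P}_o(\omega')^{1/\mathpzc{T}_0(\omega')}$ over balanced $\omega'$ with $\mathpzc{T}_0(\omega')<0$ and $\mathpzc{P}_o(\omega')>1$. Each term lies in $(0,1)$, but the infimum runs over infinitely many sequences, so it could a priori be $0$. In that event no admissible $\beta>0$ would exist, and your proposal does not rule this out.

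The paper handles it with primal balanced sequences, those having no proper balanced subsequence. By Dickson's lemma there are only finitely many. When no balanced sequence has positive $\mathpzc{T}_0$, splitting a non-primal $\omega'$ into two balanced pieces always yields a piece $\omega''$ in the same class with $\mathpzc{P}_o(\omega'')^{1/\mathpzc{T}_0(\omega'')}\le\mathpzc{P}_o(\omega')^{1/\mathpzc{T}_0(\omega')}$. This uses SAR-ODEU on any piece with $\mathpzc{T}_0=0$. Hence the infimum is a minimum over a finite set of positive numbers.

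A linear-programming argument would also work. An unbounded supremum of $\ln\mathpzc{P}_o(\omega')/|\mathpzc{T}_0(\omega')|$ would yield a rational recession direction. After scaling, that direction is a balanced sequence with $\mathpzc{T}_0=0$ and $\mathpzc{P}_o>1$, contradicting SAR-ODEU. Some finiteness or compactness argument of this kind is needed to complete your proof.
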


When $S=1$, we have $\pi_s=1$ and DEU reduces to EDU. In this case, SAR-ODEU is equivalent to the condition of SAR-EDU in \citet{Echenique2020}. They prove that SAR-EDU characterizes EDU by invoking the theorem of the alternative. Here by directly constructing the discount factor $\beta$, we prove Theorem~\ref{theorem: SAR-ODEU} without recourse to that theorem. 

The intuition of the proof runs as follows. From \eqref{equation: necessary condition}, we have that for any balanced sequence $\omega$ with $\mathpzc{T}_0(\omega) \geq 0$,
$$\beta\geq \left(\mathpzc{P}_o(\omega)\right)^{1/\mathpzc{T}_0(\omega)}.$$
In other words, $\left(\mathpzc{P}_o(\omega)\right)^{1/\mathpzc{T}_0(\omega)}$ provides a lower bound for the discount factor. Then take
\begin{equation}\label{equation: beta construction}
\beta=\sup_{\omega}\left(\mathpzc{P}_o(\omega)\right)^{1/\mathpzc{T}_0(\omega)},
\end{equation}
where the supremum is taken over all balanced sequence $\omega$ with $\mathpzc{T}_0(\omega) \geq 0$. With $\beta$ given, we can regard each pair $(s,t)$ as a single state which occurs with probability $\beta^t\pi_s$ and, therefore, regard Problem~\eqref{equation: OP} as an objective expected utility (OEU) maximization problem. Let $\bar{\rho}_{st}^k=\rho_{st}^k/\beta^t$. By Proposition~S.9 of \citet{Echenique2015}, the data set $\mathcal{O}$ is OEU rational if and only if for any balanced sequence $\omega=\left(x_{s_it_i}^{k_i}, x_{s'_it'_i}^{k'_i}\right)_{i=1}^n$,
\begin{equation}\label{equation: OEU rational}
\prod_{i=1}^n \frac{\bar{\rho}_{s_it_i}^{k_i}}{\bar{\rho}_{s'_it'_i}^{k'_i}}\leq 1.
\end{equation}
But \eqref{equation: OEU rational} is an immediate consequence of \eqref{equation: beta construction} by noting that $\prod_{i=1}^n (\bar{\rho}_{s_it_i}^{k_i}/\bar{\rho}_{s'_it'_i}^{k'_i})=\beta^{-\mathpzc{T}_0(\omega)}\mathpzc{P}_o(\omega)$.

\textbf{Remark}. Here the prior $\pi$ on $\mathds{S}$ is assumed fixed over time and across observations. \cite{Kubler2014} allow $\pi$ to vary with $k\in \mathds{K}$. As noted by \citet[][Section S.6]{Echenique2015}, their Proposition~S.9 can be generalized to this variable-prior setup. With the aid of this result, Theorem~\ref{theorem: SAR-ODEU} can be extended to the setup with priors varying both over time and across observations. More specifically, let $\pi_{st}^k$ be the probability of state $s$ in period $t$ for observation $k$, in which it is not required that $\pi_{st}^k=\pi_{st'}^k$ for $t\neq t'$. Theorem~\ref{theorem: SAR-ODEU} still holds in the case by redefining $\rho_{st}^k=p_{st}^k/\pi_{st}^k$. This general setup will be useful in Section~\ref{section: Empirical Illustration}.

\subsection{The Case with Subjective Probabilities}
In this section, we consider the case where the probability measure $\pi$ on $\mathds{S}$ is unknown. To this end, we need to strengthen the concept of balancedness of a sequence.

Let $\mathds{S}_j=\{j+1, \ldots, S\}$, $j=0, 1, \ldots, S-1$, and we make the convention that $\mathds{S}_S=\varnothing$. 
\begin{definition}\label{definition: s balanced}
For any $j=0, 1, \ldots, S$, a sequence of pairs $\left(x_{s_it_i}^{k_i}, x_{s'_it'_i}^{k'_i}\right)_{i=1}^n$ is $\mathds{S}_j$-balanced if  it is balanced and each $s\in \mathds{S}_j$ appears as $s_i$ (on the left of the pair) the same number of times it appears as $s'_i$ (on the right of the pair).
\end{definition}

A remark is in order here. A balanced sequence, as defined in Definition~\ref{definition: k balanced}, refers to the number of times each $k\in \mathds{K}$ appears on the left and right of a sequence. Besides requiring this \lq\lq$k$-balancedness,\rq\rq an  $\mathds{S}_j$-balanced sequence also imposes a restriction on the number of times each $s\in \mathds{S}_j$ appears on both sides of a sequence. In particular, because $\mathds{S}_S=\varnothing$, an $\mathds{S}_S$-balanced sequence is just a balanced sequence.

Suppose that $\mathcal{O}$ is $(\beta, \pi, u)$-DEU rational. Given a sequence $\omega=\left(x_{s_it_i}^{k_i}, x_{s'_it'_i}^{k'_i}\right)_{i=1}^n$, let
$$\mathpzc{P}_0(\omega)=\prod_{i=1}^n \frac{p_{s_it_i}^{k_i}}{p_{s'_it'_i}^{k'_i}},$$
where, in contrast to $\mathpzc{P}_o$ defined in \eqref{equation: T0Po}, the subscript $0$ in $\mathpzc{P}_0$ here is the number \lq\lq zero\rq\rq, not the English letter \lq\lq o.\rq\rq If $\omega$ is $\mathds{S}_0$-balanced, it follows from the first-order condition~\eqref{equation: FOC} that
\begin{equation}\label{equation: P0 FOC}
\prod_{i=1}^{n}\frac{u'(x_{s_it_i}^{k_i})}{u'(x_{s'_it'_i}^{k'_i})}=\beta^{-\mathpzc{T}_0(\omega)}\mathpzc{P}_0(\omega)\leq 1.
\end{equation}
Again if $\mathpzc{T}_0(\omega) \geq 0$, $\mathpzc{P}_0(\omega)\leq 1$ as $\beta\in (0,1]$.  We formally state this as the following definition.
\begin{definition}\label{definition: SAR-DEU}
The data set $\mathcal{O}$  satisfies the strong axiom of revealed discounted expected utility (SAR-DEU) if $\mathpzc{P}_0(\omega)\leq 1$ for any $\mathds{S}_0$-balanced sequence $\omega$ with $\mathpzc{T}_0(\omega) \geq 0$.
\end{definition}

As in the previous subsection, SAR-DEU turns out to be both necessary and sufficient for a data set to be DEU rational.
\begin{theorem}\label{theorem: SAR-DEU}
The data set $\mathcal{O}$ is DEU rational if and only if it satisfies SAR-DEU.
\end{theorem}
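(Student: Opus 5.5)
Necessity follows from \eqref{equation: P0 FOC}, once we handle non-differentiable $u$. As in \citet{Echenique2015}, concavity gives supergradients $\mu^k_{st}\in\partial u(x^k_{st})$ with $\beta^t\pi_s\mu^k_{st}=\lambda^k p^k_{st}$ and $\lambda^k>0$. These supergradients can be chosen monotone: $x>x'$ implies $\mu\le\mu'$. Take an $\mathds{S}_0$-balanced $\omega$. The factors $\lambda^k$ cancel by $k$-balancedness, and the factors $\pi_s$ cancel because every $s\in\mathds{S}_0=\mathds{S}$ appears equally often on each side. This leaves $\beta^{-\mathpzc{T}_0(\omega)}\mathpzc{P}_0(\omega)\le 1$. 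Hence $\mathpzc{P}_0(\omega)\le\beta^{\mathpzc{T}_0(\omega)}\le 1$ whenever $\mathpzc{T}_0(\omega)\ge 0$.

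For sufficiency I would follow the strategy used for Theorem~\ref{theorem: SAR-ODEU}: construct the unknown parameters explicitly, then reduce to a known characterization. The cleanest reduction treats each $(s,t)$ as a "state" with subjective weight $\beta^t\pi_s$. Rationality is invariant to scaling these weights, so after normalisation we only need positive weights $\theta_{st}$ of the product form $\theta_{st}=\beta^t\pi_s$ with $\beta\le1$. Given such weights, Theorem~\ref{theorem: SAR-ODEU} in its variable-prior version (the Remark), or Proposition~S.9 of \citet{Echenique2015}, says that $\mathcal{O}$ is rationalizable if and only if
\[
\prod_i \frac{p^{k_i}_{s_it_i}}{p^{k'_i}_{s'_it'_i}}\cdot\prod_i\frac{\theta_{s'_it'_i}}{\theta_{s_it_i}}\le1
\]
for every balanced $\omega$.

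Take logarithms, with $b=\log\beta\le 0$ and $a_s=\log\pi_s$. Write $n_s(\omega)$ for the number of times $s$ appears on the left of $\omega$ minus the number of times it appears on the right. The requirement becomes
\[
\log\mathpzc{P}_0(\omega)\le b\,\mathpzc{T}_0(\omega)+\sum_s a_s n_s(\omega)\quad\text{for all balanced }\omega,
\]
which is linear in $(b,a)$. I would obtain $(b,a)$ by induction on $j$, from $j=0$ up to $S$. At stage $j$ the condition is known for $\mathds{S}_j$-balanced sequences. I then choose $a_{j+1}$ (and, at stage $j=0$, $b$, exactly as in \eqref{equation: beta construction}) as a supremum of the form
\[
\sup_\omega \frac{\log\mathpzc{P}_j(\omega)}{n_{j+1}(\omega)},
\]
where $\mathpzc{P}_j$ is the price product already adjusted by the weights fixed so far. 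The supremum runs over sequences that are $\mathds{S}_{j+1}$-balanced with $n_{j+1}(\omega)$ of a given sign. The stage-$j$ condition would then be extended to $\mathds{S}_{j+1}$-balanced sequences. This is the reason the paper introduces the nested family $\mathds{S}_j$: at $j=S$ we reach all balanced sequences, which is exactly the OEU condition above.

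The main obstacle is the inductive step. There are two concerns. First, the supremum must be finite, and a matching infimum bound from sequences with the opposite sign of $n_{j+1}$ must lie weakly above it. Second, $b\le0$ must hold. For the first, I would concatenate two sequences with opposite signs of $n_{j+1}$, taking positive integer multiples so that state $j+1$ cancels. The result is $\mathds{S}_j$-balanced, so the stage-$j$ inequality applies and yields (sup)$\,\le\,$(inf). Rational multiples are harmless because repetition of sequences preserves balancedness and multiplies all quantities proportionally. Finiteness of the sup over infinitely many sequences is the delicate point. I would handle it by restricting to sequences in which each pair appears boundedly often, via a decomposition into finitely many "cycles". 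Alternatively, I would simply note that the inequality system is a finite linear system in the counts, so the extreme rays are finitely many. For $b\le 0$ I would use the case $\mathpzc{T}_0\ge0$ of SAR-DEU, as in the objective case. Finally, I would normalise $\pi$ to sum to one, which only rescales all $\theta_{st}$ by a common constant.
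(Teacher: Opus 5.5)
Your proposal is correct and follows essentially the same route as the paper. You fix $\beta$ from the $\mathds{S}_0$-balanced sequences, determine $\pi_1,\ldots,\pi_S$ successively over the nested $\mathds{S}_r$-balanced families, obtain $\sup\le\inf$ by concatenating integer multiples of two sequences with opposite signs of the unbalanced state count, and conclude with Proposition~S.9 of \citet{Echenique2015}; the differences are cosmetic (logarithms, an extreme-ray bound in place of the paper's primal-sequence/Dickson finiteness lemma, explicit normalisation of $\pi$), though when one sign class is empty, e.g.\ no $\mathds{S}_0$-balanced sequence has $\mathpzc{T}_0>0$, you must use the infimum side or a default value rather than a supremum, as the paper's case-by-case definitions do.
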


When $T=0$, this theorem is identical with Theorem~1 of \cite{Echenique2015}. The intuition of the theorem goes as follows. By considering $\mathds{S}_0$-balanced sequences, we get \eqref{equation: P0 FOC}, which gives
$$\beta\geq \left(\mathpzc{P}_0(\omega)\right)^{1/\mathpzc{T}_0(\omega)}.$$
Therefore, $\left(\mathpzc{P}_0(\omega)\right)^{1/\mathpzc{T}_0(\omega)}$ furnishes a lower bound for $\beta$. Take $\beta=\sup_{\omega}\left(\mathpzc{P}_0(\omega)\right)^{1/\mathpzc{T}_0(\omega)}$, where the supremum is take over all $\mathds{S}_0$-balanced sequences. To determine $\pi_1$, consider $\mathds{S}_1$-balanced sequences and let $p_{st}^{k1}=p_{st}^k/\beta^t$. Given a sequence $\omega=\left(x_{s_it_i}^{k_i}, x_{s'_it'_i}^{k'_i}\right)_{i=1}^n$, let 
\begin{equation}\label{equation: T1P1}
\begin{aligned}
& \mathpzc{T}_1(\omega)=\sharp s_1\text{ appears on the left of }\omega-\sharp s_1\text{ appears on the right of }\omega,\\
  &\mathpzc{P}_1(\omega)=\prod_{i=1}^n \frac{p_{s_it_i}^{k_i, 1}}{p_{s'_it'_i}^{k'_i, 1}}.
\end{aligned}
\end{equation}
If $\omega$ is $\mathds{S}_1$-balanced, we have
$$\prod_{i=1}^{n}\frac{u'(x_{s_it_i}^{k_i})}{u'(x_{s'_it'_i}^{k'_i})}=\pi_1^{-\mathpzc{T}_1(\omega)}\mathpzc{P}_1(\omega)\leq 1,$$
so that $\pi_1\geq \left(\mathpzc{P}_1(\omega)\right)^{1/\mathpzc{T}_1(\omega)}$. Define $\pi_1=\sup_{\omega}\left(\mathpzc{P}_0(\omega)\right)^{1/\mathpzc{T}_0(\omega)}$, where the supremum is take over all $\mathds{S}_1$-balanced sequences. Repeating this process, we will finally obtain $\pi_S$. Given $\beta$ and $(\pi_s)_{s\in \mathds{S}}$, Problem~\eqref{equation: OP} reduces to an OEU maximization problem. Therefore, we can apply Proposition~S.9 of \citet{Echenique2015} to show the DEU rationality of $\mathcal{O}$.

\section{Empirical Illustration}\label{section: Empirical Illustration}
In this section, we apply our theoretical results in the preceding section to the experimental data of AS. We first briefly describe AS's experiment and then present the test result.

AS's experiment employs a method called Convex Time Budget (CTB) introduced in an earlier paper \citep{Andreoni2012}. Specifically, each subject is endowed with $100$ experimental tokens and asked to allocate them between a sooner date ($t$ days from the experiment date) and a later date ($t+d$ days from the experiment date). Tokens allocated to the later date always have a value of $a_{t+d}=\$ 0.2$, while the value $a_t$ of those allocated to the sooner date varies from $\$0.2$ to $\$0.14$ per token. The payments scheduled for both dates are subject to risk. Let $\pi_1$ and $\pi_2$ be the respective probabilities that the payment will be made for the sooner and the later dates. In all the CTB decisions, we have $t=7$, $d\in \{28, 56\}$, $(\pi_1, \pi_2)\in \{(1, 1), (0.5, 0.5), (1, 0.8), (0.5, 0.4), (0.8, 1), (0.4, 0.5)\}$, and $a_t\in \{0.2, 0.19, 0.18, 0.17, 0.16, 0.15, 0.14\}$. As such, each CTB decision of a subject can be described by a tuple $\gamma=(d, a_t, c_t, \pi_1, \pi_2)$, where $c_t$ is the number of tokens allocated to the sooner date.

AS find that DEU is rejected if some payments are made with certainty, but fits very well with the data if all payments involved are risky. Let us verify this finding using the theoretical results in the preceding section. To this end, let $\Gamma$ be the set of all tuples $\gamma$ observed in AS's experiment,  $\Gamma_1$ the set of tuples with certain payments and $\Gamma_2$ all tuples with risky payments, i.e.
\begin{align*}
 \Gamma_1&=\{(d, a_t, c_t, \pi_1, \pi_2)\in \Gamma: (\pi_1, \pi_2)=(1, 1)\}, \\
  \Gamma_2&=\{(d, a_t, c_t, \pi_1, \pi_2)\in \Gamma: (\pi_1, \pi_2)\in \{(0.5, 0.5), (0.5, 0.4), (0.4, 0.5)\}\}.
\end{align*}
We verify whether each data set satisfies SAR-ODEU. The details of the implementation is explained in Appendix~\ref{section: Implementation of the Revealed Preference Test }. The test result is presented in Tab.~\ref{table: pass rate}. Form it, we can see that out of  the $80$ subjects in AS's experiment, only one of them conform to DEU when all observations (i.e. $\Gamma$) are considered, and this number just slightly increases to $7$ when considering the risky observations (i.e. $\Gamma_2$). This is somewhat at variance with the findings of AS. But the number increases dramatically to $68$ if only observations with certain payments (i.e. $\Gamma_1$) are considered. This means $85\%$ of the subjects behave in line with EDU. 
\begin{table}
  \centering  \caption{Pass Rate}\label{table: pass rate}
\begin{tabular}{ccc}
  \hline\hline
  Data set & Number of choices & Pass rate \\\hline
  $\Gamma$ & 84 & 0.0125\\
  $\Gamma_1$ & 20 & 0.85 \\
  $\Gamma_2$ & 30 & 0.0875 \\
  \hline
\end{tabular}
\end{table}

\section{Concluding Remarks}\label{section: Concluding Remarks}

This paper provides a revealed-preference characterization of  DEU models with a concave vNM utility function. The analysis covers two cases: one in which state probabilities are known and only the discount factor is to be recovered, and a more general case in which both the discount factor and the state probabilities are unknown. In both settings, we offer a nonparametric test of DEU.

We apply the test to the experimental data of AS. The results indicate that DEU is rejected for nearly all subjects in the full data set, and its performance only shows a modest increase when restricted to observations with risky payments. This finding contrasts with the view that DEU fits well once risk is introduced into all payments.

For future research, note that the information structure of the DEU model studied in this paper is described by a constant filtration; that is, the information algebra is the same in every period, so no new information arrives over time. This assumption is well suited to many experimental investigations of DEU models. In many real-world environments, however, information accumulates over time and learning is possible. A natural information structure for capturing such environments is an increasing filtration \citep{Kochov2015}. How DEU can be tested under such an information structure? If the state probabilities in each period are known, Theorem~\ref{theorem: SAR-ODEU} continues to hold verbatim. When the state probabilities are unknown, however, Theorem~\ref{theorem: SAR-DEU} does not directly carry over. Therefore, developing tests of DEU under the latter situation constitutes an important avenue for future research.

\appendix
\begin{appendices}
\numberwithin{equation}{section}
\section{Proofs}

\subsection{Proof of Theorem~\ref{theorem: SAR-ODEU}}\label{section: Proof of Theorem SAR-ODEU}

Before proceeding, we reproduce a result of \citet{Echenique2015}. Take $\mathds{S}\times \mathds{T}$ as a set of states. Suppose it is known that state $(s,t)$ occurs with probability $\pi_{st}>0$. The data set $\mathcal{O}$ is objective expected utility (OEU) rational if there exists a  utility function $u: \mathbb{R}_+\rightarrow \mathbb{R}$ such that $\mathbf{p}^k\cdot \mathbf{y}\leq \mathbf{p}^k\cdot \mathbf{x}^k$ implies $\sum_{s, t}\pi_{st}u(y_{st})\leq \sum_{s, t}\pi_{st}u(x_{st}^k)$. The following result characterizes OEU rational data sets.
\begin{lemma}[Proposition~S.9 of \citet{Echenique2015}]\label{lemma: OEU rational}
The data set $\mathcal{O}$ is OEU rational if and only if for every balanced sequence $\left(x_{s_it_i}^{k_i}, x_{s'_it'_i}^{k'_i}\right)_{i=1}^n$, it holds that 
$\prod_{i=1}^n \frac{p_{s_it_i}^{k_i}/\pi_{s_it_i}}{ p_{s'_it'_i}^{k'_i}/\pi_{s'_it'_i}}\leq 1.$
\end{lemma}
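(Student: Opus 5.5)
We prove the two directions separately. Necessity uses supergradient first-order conditions. Sufficiency reduces to a system of linear "difference constraints" in the logarithms of the Lagrange multipliers, whose solvability is exactly the balanced-sequence condition, followed by an Afriat-style construction of a piecewise-linear concave $u$.

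\emph{Necessity.} Suppose $u$ rationalizes $\mathcal{O}$. Since $u$ is concave and strictly increasing, the Karush--Kuhn--Tucker conditions for $\max \sum_{s,t}\pi_{st}u(y_{st})$ subject to $\mathbf{p}^k\cdot\mathbf{y}\le \mathbf{p}^k\cdot\mathbf{x}^k$ hold in superdifferential form. That is, for each $k$ there is $\lambda^k>0$ and, for each $(s,t)$, some $v^k_{st}\in\partial u(x^k_{st})$ with $\pi_{st}v^k_{st}=\lambda^k p^k_{st}$.

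At a coordinate with $x^k_{st}=0$ the condition is only $\pi_{st}v\ge \lambda^k p^k_{st}$ for some $v\in\partial u(0)$. Since $\partial u(0)=[u'_+(0),\infty)$ is unbounded above, we may take $v^k_{st}=\lambda^k p^k_{st}/\pi_{st}$ itself, so equality can always be arranged.

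Concavity makes the superdifferential monotone: $x^k_{st}>x^{k'}_{s't'}$ implies $v^k_{st}\le v^{k'}_{s't'}$. Now take a balanced sequence and multiply the inequalities $v^{k_i}_{s_it_i}/v^{k'_i}_{s'_it'_i}\le 1$ over $i=1,\ldots,n$. Substituting $v=\lambda\,p/\pi$, balancedness makes every $\lambda^k$ appear equally often in numerator and denominator, so the multipliers cancel. What remains is exactly $\prod_i \frac{p^{k_i}_{s_it_i}/\pi_{s_it_i}}{p^{k'_i}_{s'_it'_i}/\pi_{s'_it'_i}}\le 1$.

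\emph{Sufficiency.} Write $\rho^k_{st}=p^k_{st}/\pi_{st}$. We seek $\lambda^k>0$ such that $v^k_{st}:=\lambda^k\rho^k_{st}$ is weakly decreasing in consumption, i.e.\ $x^k_{st}>x^{k'}_{s't'}\Rightarrow v^k_{st}\le v^{k'}_{s't'}$. Ties impose no constraint, since a concave $u$ may have a kink there. Setting $\mu^k=\log\lambda^k$, these requirements read
\[
\mu^k-\mu^{k'}\le \log\rho^{k'}_{s't'}-\log\rho^k_{st}\quad\text{whenever } x^k_{st}>x^{k'}_{s't'} .
\]
This is a system of difference constraints on the nodes $\mathds{K}$. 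Draw one edge from $k'$ to $k$ with the corresponding weight for each such strict pair. By the standard shortest-path (Bellman--Ford) argument, the system is feasible iff every directed cycle has nonnegative total weight.

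A directed cycle is a sequence of pairs in which each $k$ appears equally often on the left and on the right, i.e.\ a balanced sequence. Its weight is nonnegative iff the product in the hypothesis is $\le 1$. Hence the hypothesis, applied to balanced sequences that are single cycles, yields the $\mu^k$.

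The step requiring the most care is then building a concave, strictly increasing $u$ such that each $v^k_{st}$ is a supergradient at $x^k_{st}$. List the distinct consumption values $0\le z_1<\dots<z_m$. Let $V_j$ be the set of $v$'s attached to $z_j$; by construction $\max V_{j+1}\le\min V_j$. Choose slopes $c_j\in[\max V_{j+1},\min V_j]$ on $[z_j,z_{j+1}]$, all positive. Use slope $\min V_m>0$ beyond $z_m$, and a slope $\ge\max V_1$ on $[0,z_1]$. The resulting piecewise-linear $u$ has nonincreasing positive slopes, so it is concave and strictly increasing, and every element of $V_j$ lies in $\partial u(z_j)$.

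Finally, take any $\mathbf{y}$ with $\mathbf{p}^k\cdot\mathbf{y}\le\mathbf{p}^k\cdot\mathbf{x}^k$. The supergradient inequality gives
\[
\sum_{s,t}\pi_{st}\bigl(u(y_{st})-u(x^k_{st})\bigr)\le\sum_{s,t}\pi_{st}v^k_{st}(y_{st}-x^k_{st})=\lambda^k\,\mathbf{p}^k\cdot(\mathbf{y}-\mathbf{x}^k)\le 0,
\]
so $u$ rationalizes $\mathcal{O}$ and the data set is OEU rational.
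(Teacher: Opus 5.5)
Your proof is correct, and it takes a different route from the paper. The paper gives no argument for this lemma and simply imports it from \citet{Echenique2015}, where characterizations of this kind are obtained by linearizing the first-order conditions and invoking a theorem of the alternative.

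Your first step is the standard Afriat-type one: reduce rationality to positive multipliers $\lambda^k$ with $\lambda^k\rho^k_{st}$ weakly decreasing in consumption, and build a piecewise-linear concave $u$. The difference lies in the solvability step. Because the probabilities are known, the only unknowns are $\mu^k=\log\lambda^k$, and every constraint has the form $\mu^k-\mu^{k'}\le w$. Feasibility is therefore the absence of a negative cycle in a graph on $\mathds{K}$, and a balanced sequence is exactly a union of directed cycles.

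This route buys several things:
\begin{itemize}
\item The proof is elementary and constructive: shortest-path potentials give explicit multipliers and hence an explicit $u$.
\item Only balanced sequences forming simple cycles need to be checked, with $n\le K$ and each observation used at most once on each side. The condition can then be tested by negative-cycle detection rather than by a linear program.
\item It never uses $\sum_{s,t}\pi_{st}=1$, so it directly confirms the Remark the paper relies on when it sets $\pi_{st}=\beta^t\pi_s$.
\item Combined with the paper's sup/inf construction of $\beta$ and $\pi$, it makes the proofs of Theorems~\ref{theorem: SAR-ODEU} and~\ref{theorem: SAR-DEU} free of any duality theorem.
\end{itemize}
What the theorem-of-the-alternative route buys is generality. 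Once $\log\beta$ or $\log\pi_s$ are themselves unknowns, each constraint involves more than two variables, the network structure is lost, and one is led to general ($\mathds{S}_j$-)balanced sequences.

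One step in your necessity direction needs repair. At a corner $x^k_{st}=0$, the optimality condition is $\pi_{st}u'_+(0)\le\lambda^k p^k_{st}$: raising $y_{st}$ above zero must not pay. This is exactly $\lambda^k\rho^k_{st}\in\partial u(0)=[u'_+(0),\infty)$. The inequality you wrote, $\pi_{st}v\ge\lambda^k p^k_{st}$ for some $v\in\partial u(0)$, points the wrong way: it holds for every large enough $v$ and does not yield membership.

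A clean version runs as follows. When $\mathbf{p}^k\cdot\mathbf{x}^k>0$, $\mathbf{y}=0$ is strictly feasible, so Slater's condition holds. Lagrangian duality then gives $\lambda^k>0$ such that $\mathbf{x}^k$ maximizes $\sum_{s,t}\pi_{st}u(y_{st})-\lambda^k\mathbf{p}^k\cdot\mathbf{y}$ over $\mathbf{y}\ge 0$. Separability then gives $\lambda^k\rho^k_{st}\in\partial u(x^k_{st})$ coordinatewise, boundary included.

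If $\mathbf{x}^k=0$, a multiplier may fail to exist, for example when $u'_+(0)=\infty$. Such an observation can never sit on the left of a strict pair, so it never appears in a balanced sequence and can be discarded before the argument.
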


\textbf{Remark}. Lemma~\ref{lemma: OEU rational} holds valid without requiring $\sum_{s, t}\pi_{st}=1$. See also \citet[][p. 3462]{Kubler2014}.

We are now equipped to establish the theorem. Let us start with the necessity part. Suppose that $\mathcal{O}$ is objective DEU rational. Then there exist a discount factor $\beta$ and  a  utility function $u: \mathbb{R}_+\rightarrow \mathbb{R}$ such that $\mathcal{O}$ is $(\beta, \pi, u)$-DEU rational. Take $(s,t)$ as a single state occurring with probability $\pi_{st}=\pi_s\beta^t$ for all $s\in \mathds{S}$ and $t\in \mathds{T}$, so that $\mathcal{O}$ is OEU rational.  For any balanced sequence $\omega=\left(x_{s_it_i}^{k_i}, x_{s'_it'_i}^{k'_i}\right)_{i=1}^n$, we have, by Lemma~\ref{lemma: OEU rational},
$$\prod_{i=1}^n \frac{p_{s_it_i}^{k_i}/\pi_{s_it_i}}{ p_{s'_it'_i}^{k'_i}/\pi_{s'_it'_i}}= \beta^{-\mathpzc{T}_0(\omega)}\mathpzc{P}_o(\omega)\leq 1.$$
If $\mathpzc{T}_0(\omega)\geq 0$, we get $\mathpzc{P}_o(\omega)\leq 1$ as $\beta\in (0,1]$, hence $\mathcal{O}$ satisfies SAR-ODEU.

We proceed to the sufficiency part. Suppose that $\mathcal{O}$ satisfies SAR-ODEU. Let $\Omega$ be the set of all balanced sequences and
\begin{align*}
\Omega_1&=\{\omega\in \Omega: \mathpzc{T}_0(\omega)<0, \mathpzc{P}_o(\omega)>1\},\\
\Omega_2&=\{\omega\in \Omega: \mathpzc{T}_0(\omega)>0\}.
\end{align*}
With these notations, we define
\begin{equation}\label{equation: discount factor}
\beta=
  \begin{cases}
    1, & \mbox{if } \Omega_1= \varnothing\text{ and } \Omega_2= \varnothing \\
   \inf_{\omega\in \Omega_1}[\mathpzc{P}_o(\omega)]^{1/\mathpzc{T}_0(\omega)}, & \mbox{if }\Omega_1\neq \varnothing\text{ and } \Omega_2= \varnothing \\
    \sup_{\omega\in \Omega_2}[\mathpzc{P}_o(\omega)]^{1/\mathpzc{T}_0(\omega)}, & \mbox{if } \Omega_2\neq \varnothing. 
  \end{cases}
\end{equation}
It is obvious that $\beta\in [0,1]$. To show $\beta>0$, we introduce a new concept. A balanced sequence is primal if it does not contain any proper subsequence which is also balanced. More precisely, a balanced sequence $\omega=\left(x_{s_it_i}^{k_i}, x_{s'_it'_i}^{k'_i}\right)_{i=1}^n$ is primal if there does not exist any proper subset $I$ of $\{1, 2, \ldots, n\}$ such that the sequence $\omega_1=\left(x_{s_it_i}^{k_i}, x_{s'_it'_i}^{k'_i}\right)_{i\in I}$  is also balanced. Let $\omega_2=\left(x_{s_it_i}^{k_i}, x_{s'_it'_i}^{k'_i}\right)_{i\in \{0, \ldots, n\}\backslash I}$. Note that the order each pair $(x_{s_it_i}^{k_i}, x_{s'_it'_i}^{k'_i})$ appears in the sequence $\omega$ has no bite, so in the following we shall write $\omega=(\omega_1, \omega_2)$ and $\omega_2=\omega\backslash \omega_1$. Then if $\omega$ is not primal, there exists a proper subsequence $\omega_1$ of $\omega$ which is also balanced. And, as a result, the sequence $\omega_2=\omega\backslash \omega_1$  is non-empty and balanced. Let $\Omega_p$ be the set of primal sequences in $\Omega$.
\begin{lemma}\label{lemma: Omega primal}
$\Omega_p$ is finite.
\end{lemma}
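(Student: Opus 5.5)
The plan is to encode a sequence of pairs by its multiplicity vector and to show that a primal sequence has bounded length. The set of possible pairs is finite. The entries $x_{st}^k$ range over $k\in\mathds{K}$, $s\in\mathds{S}$, $t\in\mathds{T}$, so there are at most $N=KS(T+1)$ entries and hence at most $N^2$ admissible pairs $(x_{st}^k,x_{s't'}^{k'})$ with $x_{st}^k>x_{s't'}^{k'}$. Since the order of pairs is irrelevant, a sequence $\omega$ is determined by its multiplicity vector $m(\omega)\in\mathbb{Z}_+^{M}$, where $M\le N^2$ and $m_j(\omega)$ counts how often pair $j$ occurs. Balancedness is the linear condition $Am(\omega)=0$. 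Here $A$ is the $K\times M$ integer matrix whose column for pair $j=(x_{st}^k,x_{s't'}^{k'})$ is $e_k-e_{k'}$.

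The key step is to show that primal sequences correspond exactly to the minimal nonzero elements, in the componentwise order, of the monoid $\{m\in\mathbb{Z}_+^M: Am=0\}$.

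Take a balanced $\omega$ and any nonzero $m'\le m(\omega)$ with $Am'=0$ and $m'\neq m(\omega)$. Then $m'$ is the multiplicity vector of a proper subsequence $\omega_1$ of $\omega$, and $\omega_1$ is balanced. So $\omega$ is primal if and only if $m(\omega)$ is minimal among the nonzero solutions.

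The main obstacle is showing that there are only finitely many minimal elements. Two routes are available.

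\emph{Dickson's lemma.} Any set of pairwise incomparable vectors in $\mathbb{Z}_+^M$ is finite. The minimal elements are pairwise incomparable, so the set of primal multiplicity vectors is finite.

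\emph{Direct cycle argument.} This route stays self-contained and yields explicit bounds.
\begin{enumerate}
\item Interpret each pair as a directed edge $k_i\to k'_i$ in a multigraph on $\mathds{K}$. Balancedness says every vertex has equal in- and out-degree.
\item By the standard Eulerian decomposition, a balanced $\omega$ splits into edge-disjoint directed cycles, each of which is itself balanced.
\item Hence a primal $\omega$ is a single directed cycle.
\item If that cycle revisited a vertex, it would split into two shorter closed walks, each balanced. So a primal $\omega$ is a simple cycle, with length $n\le K$.
\item Each position of the cycle has at most $N^2$ choices, so $|\Omega_p|\le\sum_{n=1}^K (N^2)^n<\infty$.
\end{enumerate}

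I would present the cycle argument in the main text and mention Dickson's lemma as the alternative. The only point needing care is the exact meaning of ``proper subsequence'': it must correspond to a proper nonempty sub-multiset of the pairs.
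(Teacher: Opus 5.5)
Your proposal is correct. Your Dickson's-lemma route is exactly the paper's proof: the paper encodes a balanced sequence by its multiplicity vector $v\in\mathbb{Z}_+^N$ with $Gv=0$, identifies primal sequences with the minimal nonzero solutions, and cites Dickson's lemma.

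The cycle argument you would lead with is a genuinely different and more elementary route, and it is sound. Balancedness is exactly in-degree equals out-degree in the multigraph on $\mathds{K}$. So a primal sequence is a single simple directed cycle (self-loops $k_i=k'_i$ included), and therefore has length at most $K$. This gives an explicit bound and a concrete description of $\Omega_p$, which Dickson's lemma does not.

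What Dickson's lemma buys is independence from the structure of the constraint matrix, and the paper needs that. In the proof of Lemma~\ref{lemma: pi1 range} it asserts that $\Omega_{1p}$ is finite by \emph{following the proof} of this lemma. There, $\mathds{S}_1$-balancedness adds state-balance rows to $G$, and minimal solutions need not be simple cycles in any single graph. For instance, take $K=S=2$ and $x^2_{2t}>x^1_{1t}>x^1_{2t}>x^2_{1t}$. The sequence made of $(x^1_{2t},x^2_{1t})$, $(x^2_{2t},x^1_{1t})$ and two copies of $(x^1_{1t},x^1_{2t})$ is $\mathds{S}_1$-balanced. It has no proper nonempty $\mathds{S}_1$-balanced subsequence, has length $4>K$, and repeats a pair. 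So if you present the cycle argument in the main text, keep Dickson's lemma (or Gordan's lemma) for that later step.

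Your caveat that ``proper'' must mean proper and nonempty is well taken. Read literally, the paper's wording (any proper subset $I$) would let $I=\varnothing$, whose empty subsequence is vacuously balanced, disqualify every sequence.
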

\begin{proof}
As in the proof of Proposition~2 of \cite{Echenique2015}, let
$$\sum=\left\{(k, s, t, k', s', t')\in \mathds{K}\times \mathds{S}\times\mathds{T}\times \mathds{K}\times \mathds{S}\times\mathds{T}: x_{st}^k>x_{s't'}^{k'}\right\}.$$
Denote the cardinality of $\sum$ by $N$. Construct a $K\times N$ matrix $G$, with each element in $\mathds{K}$  corresponding to a row and each element in $\sum$ corresponding to a column. The entry in row $\hat{k}\in \mathds{K}$ and column $(k, s, t, k', s', t')\in \sum$ of $G$ is $1$ if $\hat{k}=k$, $-1$ if $\hat{k}=k'$, and $0$ otherwise. Let $\mathbb{Z}_+$ be the set of non-negative integers. For notational convenience, let $\phi$ be a bijection from $\mathds{N}$ to $\sum$, where $\mathds{N}=\{1, \ldots, N\}$. Define a mapping $\psi: \Omega\rightarrow \mathbb{Z}_+^N$ such that for each $\omega\in \Omega$, the $i$-th component of $\psi(\omega)$ is equal to the number of times the pair $(x_{st}^k, x_{s't'}^{k'})$ appears in $\omega$, where $(k, s, t, k', s', t')= \phi(i)\in \sum$. Let $\mathcal{V}=\{v\in \mathbb{Z}_+^N: Gv=0\}$. Then since $\Omega$ is the set of balanced sequences, $\psi$  is a bijection between $\Omega$ and $\mathcal{V}$. A vector $v\in \mathcal{V}$ is a minimal element of $\mathcal{V}$ if there is no other nonzero element $v'\in \mathcal{V}$ such that $v'\leq v$ and $v'\neq v$. Let $\mathcal{V}_m$ be the set of minimal elements of  $\mathcal{V}$. Then it is not hard to see $\mathcal{V}_m=\psi(\Omega_p)$. By Dickson's lemma  \citep[see][Theorem~5.1, p. 48]{Rosales}, $\mathcal{V}_m$ is finite and, therefore, so is $\Omega_p$.
\end{proof}

\begin{lemma}\label{lemma: beta 0 1}
$\beta\in (0,1]$.
\end{lemma}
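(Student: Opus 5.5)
The plan is to treat the upper bound $\beta\le 1$ and the lower bound $\beta>0$ separately, going through the three cases of \eqref{equation: discount factor}. The only nontrivial part is positivity in the second case, and there I will use the finiteness of $\Omega_p$ from Lemma~\ref{lemma: Omega primal}.

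\emph{Upper bound.} In the first case $\beta=1$ by definition. In the third case, every $\omega\in\Omega_2$ is balanced with $\mathpzc{T}_0(\omega)>0$. So SAR-ODEU gives $\mathpzc{P}_o(\omega)\le 1$, hence $[\mathpzc{P}_o(\omega)]^{1/\mathpzc{T}_0(\omega)}\le 1$, and the supremum is at most $1$. In the second case, every $\omega\in\Omega_1$ has $\mathpzc{P}_o(\omega)>1$ and $\mathpzc{T}_0(\omega)<0$. Then $[\mathpzc{P}_o(\omega)]^{1/\mathpzc{T}_0(\omega)}<1$, so the infimum is at most $1$.

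\emph{Positivity in cases one and three.} The first case is trivial. In the third case, $\Omega_2\neq\varnothing$ and each term $[\mathpzc{P}_o(\omega)]^{1/\mathpzc{T}_0(\omega)}$ is a strictly positive number, because all prices are strictly positive. A supremum over a nonempty set of positive numbers is positive.

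\emph{Positivity in case two: the main obstacle.} Here $\Omega_2=\varnothing$, so every balanced sequence, and in particular every primal one, has $\mathpzc{T}_0\le 0$. An infimum of positive numbers could a priori be $0$, so I need a uniform bound.

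1. First, every balanced $\omega$ decomposes as $\omega=(\omega_1,\ldots,\omega_m)$ with each $\omega_j\in\Omega_p$. If $\omega$ is not primal, it splits into two nonempty balanced subsequences. Iterating on the strictly shorter pieces terminates, since lengths are finite.

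2. Both $\mathpzc{T}_0$ and $\log\mathpzc{P}_o$ are additive over such decompositions.

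3. Define
$$c=\max\Bigl\{0,\ \max_{\omega'\in\Omega_p,\ \mathpzc{T}_0(\omega')<0}\frac{\log\mathpzc{P}_o(\omega')}{-\mathpzc{T}_0(\omega')}\Bigr\},$$
which is a finite number because $\Omega_p$ is finite (Lemma~\ref{lemma: Omega primal}).

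4. Take a component $\omega_j$. If $\mathpzc{T}_0(\omega_j)=0$, then SAR-ODEU applies (since $\mathpzc{T}_0\ge 0$) and gives $\log\mathpzc{P}_o(\omega_j)\le 0=c\cdot(-\mathpzc{T}_0(\omega_j))$. If $\mathpzc{T}_0(\omega_j)<0$, then $\log\mathpzc{P}_o(\omega_j)\le c\cdot(-\mathpzc{T}_0(\omega_j))$ by the definition of $c$.

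5. Summing over the components gives $\log\mathpzc{P}_o(\omega)\le c\cdot(-\mathpzc{T}_0(\omega))$ for every balanced $\omega$.

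6. For $\omega\in\Omega_1$, divide by $\mathpzc{T}_0(\omega)<0$, which reverses the inequality: $\log\mathpzc{P}_o(\omega)/\mathpzc{T}_0(\omega)\ge -c$. Hence $[\mathpzc{P}_o(\omega)]^{1/\mathpzc{T}_0(\omega)}\ge e^{-c}$.

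7. Therefore $\beta\ge e^{-c}>0$.
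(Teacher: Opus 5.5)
Your argument is correct and uses the same ingredients as the paper's proof: decomposing a balanced sequence into primal pieces, additivity of $\mathpzc{T}_0$ and $\ln\mathpzc{P}_o$, SAR-ODEU on pieces with $\mathpzc{T}_0=0$ (with $\Omega_2=\varnothing$ ruling out positive $\mathpzc{T}_0$), and finiteness of $\Omega_p$ from Lemma~\ref{lemma: Omega primal}. The difference is only in packaging: the paper splits $\omega\in\Omega_1$ two pieces at a time and uses the mediant inequality to show the infimum is a minimum over $\Omega_1\cap\Omega_p$, whereas your uniform bound $\ln\mathpzc{P}_o(\omega)\le c\,(-\mathpzc{T}_0(\omega))$ treats all pieces at once and avoids the case analysis on which pieces lie in $\Omega_1$.
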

\begin{proof}
This is obviously true if $\Omega_1= \Omega_2= \varnothing$. If $\Omega_2\neq \varnothing$,  the lemma follows at once from SAR-ODEU. It remains to consider the case where $\Omega_1\neq \varnothing$  and $\Omega_2= \varnothing$. In this case, we show $\beta=\min_{\omega\in \Omega_1 \cap \Omega_p} [\mathpzc{P}_o(\omega)]^{1/\mathpzc{T}_0(\omega)}$. For notational convenience, let $F(\omega)=[\mathpzc{P}_o(\omega)]^{1/\mathpzc{T}_0(\omega)}$. It suffices to show that for any $\omega\in \Omega_1 \backslash \Omega_p$, there exists a $\omega_p\in \Omega_p$ such that $F(\omega)\geq F(\omega_p)$. Fix $\omega\in \Omega_1 \backslash \Omega_p$. Let $\omega_1$ be a balanced subsequence of $\omega$ and $\omega_2=\omega\backslash \omega_1$. We have therefore
\begin{equation}\label{equation: omega 12}
  \mathpzc{P}_o(\omega)=\mathpzc{P}_o(\omega_1)\cdot \mathpzc{P}_o(\omega_2)\text{ and }\mathpzc{T}_0(\omega)=\mathpzc{T}_0(\omega_1)+\mathpzc{T}_0(\omega_2).
\end{equation}
Since $\mathpzc{T}_0(\omega)<0$, at least one of the two inequalities must be true: $\mathpzc{T}_0(\omega_1)<0$ or  $\mathpzc{T}_0(\omega_2)<0$. Assume for definiteness that $\mathpzc{T}_0(\omega_2)<0$. As $\Omega_2= \varnothing$, we have $\mathpzc{T}_0(\omega_1)\leq0$. If $\mathpzc{T}_0(\omega_1)=0$, then $\mathpzc{T}_0(\omega)=\mathpzc{T}_0(\omega_2)$ and by SAR-ODEU, $\mathpzc{P}_o(\omega_1)\leq 1$, so that $\mathpzc{P}_o(\omega_2)\geq \mathpzc{P}_o(\omega)$. This implies $\omega_2\in \Omega_1$. Since $\mathpzc{T}_0(\omega)=\mathpzc{T}_0(\omega_2)<0$, it follows that $F(\omega)\geq F(\omega_2)$. On the other hand, if $\mathpzc{T}_0(\omega_1)<0$, we have
\begin{equation}\label{equation: P0T0}
\frac{\ln \mathpzc{P}_o(\omega)}{\mathpzc{T}_0(\omega)}=\frac{\ln \mathpzc{P}_o(\omega_1)+\ln \mathpzc{P}_o(\omega_2)}{\mathpzc{T}_0(\omega_1)+\mathpzc{T}_0(\omega_2)}\geq \min\left\{\frac{\ln \mathpzc{P}_o(\omega_1)}{\mathpzc{T}_0(\omega_1)}, \frac{\ln \mathpzc{P}_o(\omega_2)}{\mathpzc{T}_0(\omega_2)}\right\}.
\end{equation}
Since $\mathpzc{P}_o(\omega)>1$, it follows from \eqref{equation: omega 12} that either $\mathpzc{P}_o(\omega_1)>1$  or $\mathpzc{P}_o(\omega_2)>1$, hence at least one of $\omega_1$ and $\omega_2$ must belong to $\Omega_1$. Assume for definiteness that $\omega_2\in \Omega_1$. If also $\omega_1\in \Omega_1$, then by \eqref{equation: P0T0}, there exists a $\omega'\in \{\omega_1, \omega_2\}$ such that 
$$\frac{\ln \mathpzc{P}_o(\omega)}{\mathpzc{T}_0(\omega)}\geq\frac{\ln \mathpzc{P}_o(\omega')}{\mathpzc{T}_0(\omega')}, \text{ or equivalently, } F(\omega)\geq F(\omega').$$
If $\omega_1\notin \Omega_1$, then $\mathpzc{P}_o(\omega_1)\leq 1$, so that $\mathpzc{P}_o(\omega_2)> 1$. This means $\omega_2\in \Omega_1$ and 
$\ln \mathpzc{P}_o(\omega_1)/\mathpzc{T}_0(\omega_1)> \ln \mathpzc{P}_o(\omega_2)/\mathpzc{T}_0(\omega_2)$. Consequently, by \eqref{equation: P0T0}, $F(\omega)\geq F(\omega_2).$

To summarize, for any $\omega\in \Omega_1 \backslash \Omega_p$, we have obtained  a proper subsequence $\omega''\in \Omega_1$ of $\omega$ such that $F(\omega)\geq F(\omega'').$ Since $\omega$  is a finite set, by iterating the above procedure, we will finally arrive at a subsequence $\hat{\omega}\in \Omega_p$ of $\omega$ such that $F(\omega)\geq F(\hat{\omega}).$
\end{proof}

\begin{lemma}\label{lemma: sk beta ds}
For every $\omega\in \Omega$, we have
\begin{equation}\label{equation: sk beta ds}
  \beta^{-\mathpzc{T}_0(\omega)}\mathpzc{P}_o(\omega)\leq 1.
\end{equation}
\end{lemma}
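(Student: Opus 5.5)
We show that $\beta^{-\mathpzc{T}_0(\omega)}\mathpzc{P}_o(\omega)\leq 1$ for each $\omega\in\Omega$ by splitting according to the sign of $\mathpzc{T}_0(\omega)$ and the case in the definition \eqref{equation: discount factor} of $\beta$. Taking logarithms, the claim reads $\ln\mathpzc{P}_o(\omega)\leq \mathpzc{T}_0(\omega)\ln\beta$; this is meaningful because Lemma~\ref{lemma: beta 0 1} gives $\beta\in(0,1]$.

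\emph{Case $\mathpzc{T}_0(\omega)=0$.} The claim is $\mathpzc{P}_o(\omega)\le 1$, which is SAR-ODEU directly.

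\emph{Case $\mathpzc{T}_0(\omega)>0$.} Then $\omega\in\Omega_2$, so $\Omega_2\neq\varnothing$ and $\beta$ is the supremum over $\Omega_2$. Hence $\beta\geq \mathpzc{P}_o(\omega)^{1/\mathpzc{T}_0(\omega)}$. Raising both sides to the positive power $\mathpzc{T}_0(\omega)$ gives the claim.

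\emph{Case $\mathpzc{T}_0(\omega)<0$ and $\mathpzc{P}_o(\omega)\le 1$.} Since $\beta\le 1$ and $-\mathpzc{T}_0(\omega)>0$, we have $\beta^{-\mathpzc{T}_0(\omega)}\le 1$. The product is therefore at most $1$.

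\emph{Case $\mathpzc{T}_0(\omega)<0$ and $\mathpzc{P}_o(\omega)>1$.} This is the substantive case: $\omega\in\Omega_1$, and the claim becomes $\beta\le \mathpzc{P}_o(\omega)^{1/\mathpzc{T}_0(\omega)}$, an upper bound on $\beta$.

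If $\Omega_2=\varnothing$, then $\beta$ is the infimum over $\Omega_1$, and the upper bound is immediate.

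If $\Omega_2\neq\varnothing$, then $\beta$ is a supremum over $\Omega_2$. We must show every $\omega_2\in\Omega_2$ satisfies $\mathpzc{P}_o(\omega_2)^{1/\mathpzc{T}_0(\omega_2)}\le \mathpzc{P}_o(\omega)^{1/\mathpzc{T}_0(\omega)}$. This is the main obstacle, and we handle it by concatenation. Write $a=\mathpzc{T}_0(\omega_2)>0$ and $b=-\mathpzc{T}_0(\omega)>0$. Form the sequence $\omega^*$ consisting of $b$ copies of $\omega_2$ together with $a$ copies of $\omega$. A union of balanced sequences is balanced, because the counts of each $k$ on the left and on the right add up. Moreover
\[
\mathpzc{T}_0(\omega^*)=ba+a(-b)=0, \qquad \mathpzc{P}_o(\omega^*)=\mathpzc{P}_o(\omega_2)^{b}\,\mathpzc{P}_o(\omega)^{a}.
\]
SAR-ODEU then gives $b\ln\mathpzc{P}_o(\omega_2)+a\ln\mathpzc{P}_o(\omega)\le 0$. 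Dividing by $ab>0$ yields
\[
\frac{\ln\mathpzc{P}_o(\omega_2)}{\mathpzc{T}_0(\omega_2)}\le -\frac{\ln \mathpzc{P}_o(\omega)}{b}=\frac{\ln\mathpzc{P}_o(\omega)}{\mathpzc{T}_0(\omega)}.
\]
Exponentiating and taking the supremum over $\omega_2\in\Omega_2$ gives $\ln\beta\le \ln\mathpzc{P}_o(\omega)/\mathpzc{T}_0(\omega)$. Multiplying by $\mathpzc{T}_0(\omega)<0$ reverses the inequality to $\mathpzc{T}_0(\omega)\ln\beta\ge\ln\mathpzc{P}_o(\omega)$, which is exactly the claim.

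The same concatenation argument also settles the remaining boundary case $\Omega_1=\Omega_2=\varnothing$ with $\beta=1$. There every sequence with $\mathpzc{T}_0(\omega)<0$ has $\mathpzc{P}_o(\omega)\le 1$, so it falls under the third case above.

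Combining all the cases proves \eqref{equation: sk beta ds}. Lemma~\ref{lemma: OEU rational}, applied with $\pi_{st}=\pi_s\beta^t$, then yields sufficiency in Theorem~\ref{theorem: SAR-ODEU}.
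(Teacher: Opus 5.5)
Your proof is correct and follows the paper's argument essentially step for step: the same case split on the sign of $\mathpzc{T}_0(\omega)$ and on $\mathpzc{P}_o(\omega)\le 1$ versus $\mathpzc{P}_o(\omega)>1$, and the infimum bound when $\Omega_2=\varnothing$. It also uses the same concatenation of $-\mathpzc{T}_0(\omega)$ copies of an element of $\Omega_2$ with $\mathpzc{T}_0(\cdot)$ copies of $\omega$, followed by SAR-ODEU and a supremum. Your closing remark about the case $\Omega_1=\Omega_2=\varnothing$ is harmless but unnecessary, and it involves no concatenation, since $\mathpzc{T}_0(\omega)<0$ with $\mathpzc{P}_o(\omega)>1$ already forces $\Omega_1\neq\varnothing$.
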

\begin{proof}
Take any $\omega\in \Omega$. If $\mathpzc{T}_0(\omega)=0$, we have $\mathpzc{P}_o(\omega)\leq 1$ by SAR-ODEU and hence \eqref{equation: sk beta ds}. If $\mathpzc{T}_0(\omega)>0$, then $\omega\in \Omega_2$ and so by \eqref{equation: discount factor}, $\beta= \sup_{\omega'\in \Omega_2}[\mathpzc{P}_o(\omega')]^{1/\mathpzc{T}_0(\omega')}$. This implies $\beta\geq [\mathpzc{P}_o(\omega)]^{1/\mathpzc{T}_0(\omega)}$, from which \eqref{equation: sk beta ds} follows immediately.

Finally, consider the case of $\mathpzc{T}_0(\omega)<0$. If $\mathpzc{P}_o(\omega)\leq 1$,  in view of $\beta\in (0,1]$, \eqref{equation: sk beta ds} holds obviously. Now suppose $\mathpzc{P}_o(\omega)> 1$, so that $\Omega_1\neq \varnothing$. If  $\Omega_2= \varnothing$, then $\beta=\inf_{\omega'\in \Omega_1}[\mathpzc{P}_o(\omega')]^{1/\mathpzc{T}_0(\omega')}$. Therefore, $\beta\leq [\mathpzc{P}_o(\omega)]^{1/\mathpzc{T}_0(\omega)}$. As $\mathpzc{T}_0(\omega)<0$, we get $\beta^{\mathpzc{T}_0(\omega)}\geq \mathpzc{P}_o(\omega)$, and hence \eqref{equation: sk beta ds}. If  $\Omega_2\neq \varnothing$, take any $\omega_0\in \Omega_2$. Consider the sequence $\bar{\omega}=(\omega_0,\ldots, \omega_0, \omega,\ldots, \omega)$ in which the number of $\omega_0$ is $-\mathpzc{T}_0(\omega)$ and the number of $\omega$ is $\mathpzc{T}_0(\omega_0)$. Since $\omega_0$ and $\omega$ are both balanced, so is $\bar{\omega}$. Noting that $\mathpzc{T}_0(\bar{\omega})=0$, we have, by SAR-ODEU, that $\mathpzc{P}_o(\bar{\omega})\leq 1$ or, equivalently, $(\mathpzc{P}_o(\omega_0))^{-\mathpzc{T}_0(\omega)}\cdot (\mathpzc{P}_o(\omega))^{\mathpzc{T}_0(\omega_0)}\leq 1$. Taking the $\mathpzc{T}_0(\omega_0)$-th root of both sides, we get
$\left[(\mathpzc{P}_o(\omega_0))^{1/\mathpzc{T}_0(\omega_0)}\right]^{-\mathpzc{T}_0(\omega)}\cdot \mathpzc{P}_o(\omega)\leq 1.$
Since this is true for any $\omega_0\in \Omega_2$, it follows that 
$$\left[\sup_{\omega_0\in \Omega_2}(\mathpzc{P}_o(\omega_0))^{1/\mathpzc{T}_0(\omega_0)}\right]^{-\mathpzc{T}_0(\omega)}\cdot \mathpzc{P}_o(\omega)\leq 1,$$
and hence \eqref{equation: sk beta ds}. This proves the lemma.
\end{proof}

As argued in proof of the necessity part, regard each pair $(s,t)$ as a single state which occurs with probability $\beta^t\pi_s$. Let $\bar{\rho}_{st}^k=\rho_{st}^k/\beta^t$. For any sequence $\left(x_{s_it_i}^{k_i}, x_{s'_it'_i}^{k'_i}\right)_{i=1}^n\in \Omega$, we have, by Lemma~\ref{lemma: sk beta ds},
\begin{equation}\label{equation: OEU DEU rational}
\prod_{i=1}^n \frac{\bar{\rho}_{s_it_i}^{k_i}}{\bar{\rho}_{s'_it'_i}^{k'_i}}=\beta^{-\mathpzc{T}_0(\omega)}\mathpzc{P}_o(\omega)\leq 1.
\end{equation}
By Lemma~\ref{lemma: OEU rational}, $\mathcal{O}$ is OEU rational or, equivalently, $(\beta, \pi, u)$-DEU rational.

\subsection{Proof of Theorem~\ref{theorem: SAR-DEU}}
The proof of the necessity part is the same as in Theorem~\ref{theorem: SAR-ODEU}, and so is omitted here. We concentrate on the proof of the sufficiency part.

Suppose that $\mathcal{O}$ satisfies SAR-DEU. We construct $\beta$ and the distribution $\pi$ on $\mathds{S}$ explicitly. The idea is essentially the same as in the proof of Theorem~\ref{theorem: SAR-ODEU}. Let $\Omega_0$ be the set of all $\mathds{S}_0$-balanced sequences and
\begin{align*}
\Omega_{01}&=\{\omega\in \Omega_0: \mathpzc{T}_0(\omega)<0, \mathpzc{P}_0(\omega)>1\},\\
\Omega_{02}&=\{\omega\in \Omega_0: \mathpzc{T}_0(\omega)>0\}.
\end{align*}
Define
\begin{equation}\label{equation: discount factor DEU}
\beta=
  \begin{cases}
    1, & \mbox{if } \Omega_{01}= \varnothing\text{ and } \Omega_{02}= \varnothing \\
   \inf_{\omega\in \Omega_{01}}[\mathpzc{P}_0(\omega)]^{1/\mathpzc{T}_0(\omega)}, & \mbox{if }\Omega_{01}\neq \varnothing\text{ and } \Omega_{02}= \varnothing \\
    \sup_{\omega\in \Omega_{02}}[\mathpzc{P}_0(\omega)]^{1/\mathpzc{T}_0(\omega)}, & \mbox{if } \Omega_{02}\neq \varnothing. 
  \end{cases}
\end{equation}
Letting $p_{st}^k$ taking the place of $\rho_{st}^k$, $\Omega_0$ taking the place of $\Omega$,  and following exactly the argument of Theorem~\ref{theorem: SAR-ODEU}, we will obtain $ \beta\in (0,1]$ and
\begin{equation}\label{equation: Omega0beta}
 \beta^{-\mathpzc{T}_0(\omega)}\mathpzc{P}_0(\omega)\leq 1 \text{ for all }\omega\in \Omega_0.
\end{equation}

We proceed to determine the distribution $\pi$. Let $\Omega_1$ be the set of $\mathds{S}_1$-balanced sequences. Given a sequence $\omega=\left(x_{s_it_i}^{k_i}, x_{s'_it'_i}^{k'_i}\right)_{i=1}^n$, let 
\begin{equation}\label{equation: Ts}
\mathpzc{T}_s(\omega)=\sharp s\text{ appears on the left of }\omega-\sharp s\text{ appears on the right of }\omega, s\in \mathds{S}.
\end{equation}
Let $p_{st}^{k, 1}=p_{st}^k/\beta^t$ and recall the definition of $\mathpzc{P}_1(\omega)$ from \eqref{equation: T1P1}. For any $\omega\in \Omega_0$, we have, by \eqref{equation: Omega0beta},
\begin{equation}\label{equation: Omega-pstk1}
\mathpzc{P}_1(\omega)=\beta^{-\mathpzc{T}_0(\omega)}\mathpzc{P}_0(\omega) \leq 1.
\end{equation}
Note that in the case of determining $\beta$, we have the SAR-DEU condition, but a similar condition is unavailable for states of nature. For this reason, we take a slightly different approach to determining $\pi$.  Let
\begin{align*}
\Omega_{11}&=\{\omega\in \Omega_1: \mathpzc{T}_1(\omega)<0\},\\
\Omega_{12}&=\{\omega\in \Omega_1: \mathpzc{T}_1(\omega)>0\}.
\end{align*}
Define
\begin{equation}\label{equation: pi1 DEU}
\pi_1=
  \begin{cases}
    1, & \mbox{if } \Omega_{11}= \varnothing\text{ and } \Omega_{12}= \varnothing \\
   \inf_{\omega\in \Omega_{11}}[\mathpzc{P}_1(\omega)]^{1/\mathpzc{T}_1(\omega)}, & \mbox{if }\Omega_{11}\neq \varnothing\text{ and } \Omega_{12}= \varnothing \\
    \sup_{\omega\in \Omega_{12}}[\mathpzc{P}_1(\omega)]^{1/\mathpzc{T}_1(\omega)}, & \mbox{if } \Omega_{12}\neq \varnothing. 
  \end{cases}
\end{equation}
\begin{lemma}\label{lemma: pi1 range}
$\pi_1\in (0, \infty)$.
\end{lemma}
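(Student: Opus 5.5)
The plan is to split into the three cases of definition \eqref{equation: pi1 DEU} and, as in Lemma~\ref{lemma: beta 0 1}, to reduce every supremum or infimum to a maximum or minimum over a finite set of ``primal'' sequences. For brevity let $F_1(\omega)=[\mathpzc{P}_1(\omega)]^{1/\mathpzc{T}_1(\omega)}$; it is well defined and strictly positive whenever $\mathpzc{T}_1(\omega)\neq 0$.

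The basic observation is this. An $\mathds{S}_1$-balanced sequence $\omega$ with $\mathpzc{T}_1(\omega)=0$ is $\mathds{S}_0$-balanced, because state $1$ is then balanced as well. Hence, by \eqref{equation: Omega-pstk1}, $\mathpzc{P}_1(\omega)\le 1$. Also, concatenation of $\mathds{S}_1$-balanced sequences preserves $\mathds{S}_1$-balancedness, $\mathpzc{P}_1$ is multiplicative, and $\mathpzc{T}_1$ is additive. The first case, $\Omega_{11}=\Omega_{12}=\varnothing$, is trivial since then $\pi_1=1$.

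\emph{Case $\Omega_{12}\neq\varnothing$ and $\Omega_{11}\neq\varnothing$.} Here $\pi_1>0$ is immediate, since it is a supremum of positive numbers. For finiteness, fix any $\omega\in\Omega_{11}$. For each $\omega_0\in\Omega_{12}$, form $\bar\omega$ by taking $-\mathpzc{T}_1(\omega)$ copies of $\omega_0$ and $\mathpzc{T}_1(\omega_0)$ copies of $\omega$. Then $\bar\omega$ is $\mathds{S}_1$-balanced with $\mathpzc{T}_1(\bar\omega)=0$, so $\mathpzc{P}_1(\bar\omega)\le 1$ by the observation above. Taking roots, exactly as in Lemma~\ref{lemma: sk beta ds}, gives $F_1(\omega_0)\le F_1(\omega)<\infty$ for every $\omega_0\in\Omega_{12}$, so the supremum is finite.

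\emph{Case $\Omega_{12}\neq\varnothing$ and $\Omega_{11}=\varnothing$.} This is the case I expect to be the main obstacle, because there is no sequence with negative $\mathpzc{T}_1$ to bound against. I would first call an $\mathds{S}_1$-balanced sequence \emph{primal} if it has no proper $\mathds{S}_1$-balanced subsequence. To show there are only finitely many, I would repeat Lemma~\ref{lemma: Omega primal} with the matrix $G$ augmented by one row for each $s\in\mathds{S}_1$; the entry in that row is $+1$ if $s$ appears on the left of the column's pair, $-1$ if on the right, and $0$ otherwise (net zero when it appears on both sides). Dickson's lemma then again yields a finite set of minimal solutions. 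Next, take a non-primal $\omega\in\Omega_{12}$ and split it as $\omega=(\omega_1,\omega_2)$ with both parts $\mathds{S}_1$-balanced. Since $\Omega_{11}=\varnothing$, both $\mathpzc{T}_1(\omega_i)\ge 0$, with at least one positive. If one part, say $\omega_1$, has $\mathpzc{T}_1(\omega_1)=0$, then $\mathpzc{P}_1(\omega_1)\le1$, so $F_1(\omega)\le F_1(\omega_2)$. Otherwise the mediant inequality
$$\frac{\ln\mathpzc{P}_1(\omega)}{\mathpzc{T}_1(\omega)}\le\max_{i=1,2}\frac{\ln\mathpzc{P}_1(\omega_i)}{\mathpzc{T}_1(\omega_i)}$$
gives $F_1(\omega)\le F_1(\omega')$ for some proper subsequence $\omega'\in\Omega_{12}$. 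Iterating, the length strictly decreases, so we reach a primal sequence. Hence $\pi_1$ equals a maximum over a finite set of positive reals, which lies in $(0,\infty)$.

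\emph{Case $\Omega_{11}\neq\varnothing$ and $\Omega_{12}=\varnothing$.} Here $\pi_1<\infty$ trivially, since it is an infimum over a nonempty set of finite numbers. For positivity, the symmetric argument applies. Split a non-primal $\omega\in\Omega_{11}$ into $\mathds{S}_1$-balanced parts, which now have $\mathpzc{T}_1\le 0$. A part with $\mathpzc{T}_1=0$ has $\mathpzc{P}_1\le 1$; because $\mathpzc{T}_1<0$ for the other part, this gives $F_1(\omega)\ge F_1(\text{other part})$. Otherwise the mediant inequality with a min, as in \eqref{equation: P0T0}, applies. Unlike Lemma~\ref{lemma: beta 0 1}, no restriction to $\mathpzc{P}_1>1$ is needed, since only positivity of $F_1$ matters. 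So $\pi_1$ is a minimum over finitely many primal sequences, hence $\pi_1>0$.
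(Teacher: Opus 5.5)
Your proof is correct and uses the same ingredients as the paper, but organizes the cases differently. The shared ingredients are primal $\mathds{S}_1$-balanced sequences with Dickson's lemma, the mediant inequality, and the replication trick that produces a sequence with $\mathpzc{T}_1=0$, which is therefore $\mathds{S}_0$-balanced and so satisfies $\mathpzc{P}_1\le 1$ by \eqref{equation: Omega-pstk1}.

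The paper handles all of $\Omega_{12}\neq\varnothing$ at once, by showing the supremum is attained on the finite set $\Omega_{12}\cap\Omega_{1p}$. Because $\Omega_{11}$ may be nonempty, a split $\omega=(\omega_1,\omega_2)$ can have $\mathpzc{T}_1(\omega_1)<0$. The paper therefore needs a third subcase, in which the replication trick is applied locally to $\omega_1,\omega_2$.

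You instead split on whether $\Omega_{11}$ is empty. When both sets are nonempty, you apply the replication trick globally, as in Lemma~\ref{lemma: sk beta ds}. This gives $\sup_{\Omega_{12}}F_1\le\inf_{\Omega_{11}}F_1$, hence finiteness with no primal reduction at all. In the one-sided cases the mixed-sign split cannot occur, so the primal reduction needs only the $\mathpzc{T}_1=0$ and mediant subcases.

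The paper's route gives the slightly stronger conclusion that the supremum is always attained by a primal sequence. Yours is shorter in the mixed case and already contains the inequality needed later for Lemma~\ref{lemma: pi1 T1P1 leq 1}. You also write out the case $\Omega_{12}=\varnothing$, which the paper dismisses as ``likewise,'' and you correctly observe that no $\mathpzc{P}_1>1$ restriction is needed there.
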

\begin{proof}
The lemma is trivially true if $\Omega_{11}=  \Omega_{12}= \varnothing$. Henceforth, assume $\Omega_{11}\neq \varnothing$ or $\Omega_{12}\neq \varnothing$. The proof is quite similar to that of Lemma~\ref{lemma: beta 0 1}. But for the sake of completeness, let us spell out the details.

First, we generalize the notion of primal sequences. A sequence $\omega \in \Omega_1$ is primal if it does not contain a proper subsequence which is also in $\Omega_1$. Let $\Omega_{1p}$ be the set of primal sequences in $\Omega_1$. Following the proof of Lemma~\ref{lemma: Omega primal}, we have that $\Omega_{1p}$ is finite. It suffices to show
\begin{equation*}\label{equation: pi1 DEU Omegap}
\pi_1=
  \begin{cases}
   \inf_{\omega\in \Omega_{11}\cap \Omega_{1p}}[\mathpzc{P}_1(\omega)]^{1/\mathpzc{T}_1(\omega)}, & \mbox{if }\Omega_{11}\neq \varnothing\text{ and } \Omega_{12}= \varnothing \\
    \sup_{\omega\in \Omega_{12}\cap \Omega_{1p}}[\mathpzc{P}_1(\omega)]^{1/\mathpzc{T}_1(\omega)}, & \mbox{if } \Omega_{12}\neq \varnothing. 
  \end{cases}
\end{equation*}

Let us prove the case $\Omega_{12}\neq \varnothing$; the other case can be treated likewise. For convenience of notation, let $F_1(\omega)=[\mathpzc{P}_1(\omega)]^{1/\mathpzc{T}_1(\omega)}$. As in the proof of Lemma~\ref{lemma: beta 0 1}, it suffices to show that for any $\omega \in \Omega_{12}$, there exists a sequence $\omega^* \in \Omega_{12}\cap \Omega_{1p}$ such that $F_1(\omega^*)\geq F_1(\omega)$. 

Fix $\omega \in \Omega_{12}\backslash \Omega_{1p}$. Then there exists  a proper subsequence $\omega_1$ of $\omega$ which is also in $\Omega_1$, so that $\omega_2=\omega\backslash \omega_1\in \Omega_1$. Note that 
\begin{equation}\label{equation: P1omega12}
\mathpzc{P}_1(\omega)=\mathpzc{P}_1(\omega_1)\cdot \mathpzc{P}_1(\omega_2)\text{ and }\mathpzc{T}_1(\omega)=\mathpzc{T}_1(\omega_1)+ \mathpzc{T}_1(\omega_2).
\end{equation}
Since $\mathpzc{T}_1(\omega)>0$, at least one of $\mathpzc{T}_1(\omega_1)$ and $\mathpzc{T}_1(\omega_2)$ must be strictly positive. Assume for definiteness that $\mathpzc{T}_1(\omega_2)>0$. If $\mathpzc{T}_1(\omega_1)>0$, we have, by \eqref{equation: P1omega12},
\begin{equation}\label{equation: P1T1}
\frac{\ln \mathpzc{P}_1(\omega)}{\mathpzc{T}_1(\omega)}=\frac{\ln \mathpzc{P}_1(\omega_1)+\ln \mathpzc{P}_1(\omega_2)}{\mathpzc{T}_1(\omega_1)+\mathpzc{T}_1(\omega_2)}\leq \max\left\{\frac{\ln \mathpzc{P}_1(\omega_1)}{\mathpzc{T}_1(\omega_1)}, \frac{\ln \mathpzc{P}_1(\omega_2)}{\mathpzc{T}_1(\omega_2)}\right\}.
\end{equation}
Thus, there exits a $\omega'\in \{\omega_1, \omega_2\}$ such that $F_1(\omega')\geq F_1(\omega)$. If $\mathpzc{T}_1(\omega_1)=0$, we have $\omega_1\in \Omega_0$ and therefore, by \eqref{equation: Omega-pstk1}, $\mathpzc{P}_1(\omega_1)\leq 1$. This implies $\mathpzc{P}_1(\omega)\leq \mathpzc{P}_1(\omega_2)$ and $\mathpzc{T}_1(\omega)=\mathpzc{T}_1(\omega_2)$, hence $F_1(\omega_2)\geq F_1(\omega)$. Finally, if  $\mathpzc{T}_1(\omega_1)<0$, take the sequence $\bar{\omega}=(\omega_1, \ldots, \omega_1, \omega_2, \ldots, \omega_2)$ in which the number of $\omega_1$ is $\mathpzc{T}_1(\omega_2)$ and the number of $\omega_2$ is $-\mathpzc{T}_1(\omega_1)$. Direct calculation shows $\mathpzc{T}_1(\bar{\omega})=0$, so that $\bar{\omega}\in \Omega_0$. Consequently, by \eqref{equation: Omega-pstk1},
$(\mathpzc{P}_1(\omega_1))^{\mathpzc{T}_1(\omega_2)}\cdot (\mathpzc{P}_1(\omega_2))^{-\mathpzc{T}_1(\omega_1)}\leq 1.$ Taking the logarithm of both sides, we get 
\begin{equation}\label{equation: P1T1P1T2}
\mathpzc{T}_1(\omega_2)\ln \mathpzc{P}_1(\omega_1)-\mathpzc{T}_1(\omega_1)\ln \mathpzc{P}_1(\omega_2)\leq 0.
\end{equation}
This, together with \eqref{equation: P1omega12}, implies $\frac{\ln \mathpzc{P}_1(\omega_2)}{\mathpzc{T}_1(\omega_2)}\geq \frac{\ln \mathpzc{P}_1(\omega)}{\mathpzc{T}_1(\omega)}$ or, equivalently, $F_1(\omega_2)\geq F_1(\omega)$.

To summarize, for any $\omega\in  \Omega_{12}\notin \Omega_{1p}$, we have obtained  a proper subsequence $\omega'\in \Omega_{12}$ of $\omega$ such that $F_1(\omega')\geq F_1(\omega).$ Since $\omega$  is a finite set, iterating the above procedure will eventually give rise to a subsequence $\hat{\omega}\in \Omega_{1p}$ of $\omega$ such that $F_1(\hat{\omega})\geq F_1(\omega).$
\end{proof}

Furthermore, we have the following result
\begin{lemma}\label{lemma: pi1 T1P1 leq 1}
$\pi_1^{-\mathpzc{T}_1(\omega)}\mathpzc{P}_1(\omega)\leq 1 \text{ for all }\omega\in \Omega_1.$
\end{lemma}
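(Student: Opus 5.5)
I will mirror the proof of Lemma~\ref{lemma: sk beta ds}, splitting on the sign of $\mathpzc{T}_1(\omega)$ for a fixed $\omega\in\Omega_1$. Lemma~\ref{lemma: pi1 range} guarantees $\pi_1\in(0,\infty)$, so all powers below are well defined. The key fact is \eqref{equation: Omega-pstk1}: $\mathpzc{P}_1(\omega)\le 1$ for every $\omega\in\Omega_0$. Every $\omega\in\Omega_1$ with $\mathpzc{T}_1(\omega)=0$ is $\mathds{S}_0$-balanced, since state $1$ then also appears equally often on both sides. This is the substitute for SAR-DEU that drives the argument.

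\emph{Case $\mathpzc{T}_1(\omega)=0$.} Here $\omega\in\Omega_0$, and $\pi_1^{0}\mathpzc{P}_1(\omega)=\mathpzc{P}_1(\omega)\le 1$ by \eqref{equation: Omega-pstk1}.

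\emph{Case $\mathpzc{T}_1(\omega)>0$.} Then $\omega\in\Omega_{12}\neq\varnothing$, so by \eqref{equation: pi1 DEU} we have $\pi_1=\sup_{\Omega_{12}}F_1\ge F_1(\omega)$. Raising to the positive power $\mathpzc{T}_1(\omega)$ gives $\pi_1^{\mathpzc{T}_1(\omega)}\ge\mathpzc{P}_1(\omega)$, which is the claim.

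\emph{Case $\mathpzc{T}_1(\omega)<0$.} Here $\omega\in\Omega_{11}$.
\begin{enumerate}[label=(\roman*)]
\item If $\Omega_{12}=\varnothing$, then $\pi_1=\inf_{\Omega_{11}}F_1\le F_1(\omega)$. Raising to the negative power $\mathpzc{T}_1(\omega)$ reverses the inequality, giving $\pi_1^{\mathpzc{T}_1(\omega)}\ge \mathpzc{P}_1(\omega)$, as required.
\item If $\Omega_{12}\neq\varnothing$, take any $\omega_0\in\Omega_{12}$. Form $\bar\omega$ by concatenating $-\mathpzc{T}_1(\omega)$ copies of $\omega_0$ with $\mathpzc{T}_1(\omega_0)$ copies of $\omega$. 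Then $\bar\omega\in\Omega_1$ and $\mathpzc{T}_1(\bar\omega)=0$, so $\bar\omega\in\Omega_0$, and \eqref{equation: Omega-pstk1} gives
$$\mathpzc{P}_1(\omega_0)^{-\mathpzc{T}_1(\omega)}\,\mathpzc{P}_1(\omega)^{\mathpzc{T}_1(\omega_0)}\le 1.$$
Taking the $\mathpzc{T}_1(\omega_0)$-th root yields $F_1(\omega_0)^{-\mathpzc{T}_1(\omega)}\mathpzc{P}_1(\omega)\le 1$. The exponent $-\mathpzc{T}_1(\omega)$ is positive, so the map $a\mapsto a^{-\mathpzc{T}_1(\omega)}$ is increasing and continuous. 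Taking the supremum over $\omega_0\in\Omega_{12}$ therefore gives $\pi_1^{-\mathpzc{T}_1(\omega)}\mathpzc{P}_1(\omega)\le 1$.
\end{enumerate}

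The only delicate point is subcase (ii). There I must check two things. First, concatenations of $\mathds{S}_1$-balanced sequences remain $\mathds{S}_1$-balanced, because the $k$-counts and the counts for each $s\ge 2$ add. Second, $\mathpzc{P}_1$ is multiplicative and $\mathpzc{T}_1$ is additive under concatenation. Both follow directly from the definitions.

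Unlike in Lemma~\ref{lemma: sk beta ds}, no bound of the form $\pi_1\le 1$ is needed, because $\Omega_{11}$ is defined without the restriction $\mathpzc{P}_1>1$. Finiteness and positivity of $\pi_1$ come from Lemma~\ref{lemma: pi1 range}.
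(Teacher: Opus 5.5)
Your proposal is correct and is the same approach as the paper, which omits this proof as ``the same as that of Lemma~\ref{lemma: sk beta ds}''; you carry out that adaptation, with \eqref{equation: Omega-pstk1} taking the place of SAR-ODEU in the $\mathpzc{T}_1(\omega)=0$ case and in the replication argument, and with Lemma~\ref{lemma: pi1 range} supplying $\pi_1\in(0,\infty)$. You are also right that the subcase of Lemma~\ref{lemma: sk beta ds} that uses $\beta\le 1$ has no counterpart here, because $\Omega_{11}$ is defined without the restriction $\mathpzc{P}_1>1$.
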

\begin{proof}
The proof is the same as that of Lemma~\ref{equation: sk beta ds} and is omitted here.
\end{proof}

By induction, suppose we have determined $\pi_{r-1}$, $r\geq 2$. To define $\pi_r$, let $\Omega_r$ be the set of $\mathds{S}_r$-balanced sequences and
\begin{equation}\label{equation: pstk2}
p_{st}^{k, r}=
\begin{cases}
p_{st}^{k, r-1}/\pi_{r-1}, & \mbox{if } s=r-1 \\
p_{st}^{k, r-1}, & \mbox{otherwise}.
\end{cases}
\end{equation}
For a sequence $\omega=\left(x_{s_it_i}^{k_i}, x_{s'_it'_i}^{k'_i}\right)_{i=1}^n$, let 
\begin{equation}\label{equation: P2}
\begin{aligned}
\mathpzc{P}_r(\omega)=\prod_{i=1}^n \frac{p_{s_it_i}^{k_i, r}}{p_{s'_it'_i}^{k'_i, r}}.
\end{aligned}
\end{equation}
If $\omega\in \Omega_{r-1}$, we have, by Lemma~\ref{lemma: pi1 T1P1 leq 1}, $\mathpzc{P}_r(\omega)\leq 1$. Let $\Omega_{r1}=\{\omega\in \Omega_r: \mathpzc{T}_r(\omega)<0\}$ and $\Omega_{r2}=\{\omega\in \Omega_r: \mathpzc{T}_r(\omega)>0\}$. Define
\begin{equation}\label{equation: pi2 DEU}
\pi_r=
  \begin{cases}
    1, & \mbox{if } \Omega_{r1}= \varnothing\text{ and } \Omega_{r2}= \varnothing \\
   \inf_{\omega\in \Omega_{r1}}[\mathpzc{P}_r(\omega)]^{1/\mathpzc{T}_r(\omega)}, & \mbox{if }\Omega_{r1}\neq \varnothing\text{ and } \Omega_{r2}= \varnothing \\
    \sup_{\omega\in \Omega_{r2}}[\mathpzc{P}_r(\omega)]^{1/\mathpzc{T}_r(\omega)}, & \mbox{if } \Omega_{r2}\neq \varnothing. 
  \end{cases}
\end{equation}
Similar to the case of $\pi_1$, we can show $\pi_r\in (0, \infty)$ and 
\begin{equation}\label{equation: Omega1pi2}
 \pi_r^{-\mathpzc{T}_r(\omega)}\mathpzc{P}_r(\omega)\leq 1 \text{ for all }\omega\in \Omega_r.
\end{equation}

With $\beta$ and $(\pi_1, \ldots, \pi_S)$ thus defined, to show $\mathcal{O}$ is DEU rational, it suffices to show it is OEU rational. Take any $\omega=\left(x_{s_it_i}^{k_i}, x_{s'_it'_i}^{k'_i}\right)_{i=1}^n\in \Omega$. Since $\Omega_S=\Omega$,  there exists an $r\in \mathds{S}\cup\{0\}$ such that $\omega\in \Omega_r\backslash \Omega_{r-1}$, where we set $\Omega_{-1}=\varnothing$. Then we have, by \eqref{equation: Omega1pi2},
$$\prod_{i=1}^n \frac{p_{s_it_i}^{k_i}/(\pi_{s_i}\beta^{t_i})}{p_{s'_it'_i}^{k'_i}/(\pi_{s'_i}\beta^{t'_i})}= \pi_r^{-\mathpzc{T}_r(\omega)}\mathpzc{P}_r(\omega)\leq 1.$$
It follows from Lemma~\ref{lemma: OEU rational} that $\mathcal{O}$ is OEU rational, hence DEU rational. 

\section{Implementation of the Revealed Preference Test for AS's experimental data}\label{section: Implementation of the Revealed Preference Test }
The CTB decision problem in AS's experiment is much easier than the one in Section~\ref{section:The Setup} and a test simpler than the one in Section~\ref{section: The Case with Objective Probabilities} is available. Specifically, AS's CTB decision is to seek $(x_t, x_{t+d})$ which solves
\begin{equation}\label{equation: OPAS}
\begin{cases}
  \max & \beta^t \pi_1 u(x_{t}) +\beta^{t+d}\pi_2u(x_{t+d})\\
  \mbox{s. t.} &p_t x_t+p_{t+d}x_{t+d}=I.
\end{cases}
\end{equation}
Suppose we have a set of $K$ observations $\mathcal{O}=\left\{(p_t^k, p_{t+d}^k; x_t^k, x_{t+d}^k;\pi_1^k, \pi_2^k): k=1, 2, \ldots, K\right\}$. This set is DEU rational if there exist a discount factor $\beta$ and a utility function $u$ such that $(x_t^k, x_{t+d}^k)$ solves Problem~\eqref{equation: OPAS} when $(p_t, p_{t+d})=(p_t^k, p_{t+d}^k)$ and $(\pi_1, \pi_2)=(\pi_1^k, \pi_2^k)$. 

To determine the DEU rationality of $\mathcal{O}$, we follow the procedure in Section~\ref{section: The Case with Objective Probabilities}. For the sake of completeness, let us spell out the details. A sequence $\omega=\left\{(x_{t_i}^{k_i}, x_{t'_i}^{k'_i}\right\}_{i=1}^n$ is balanced if  each $k$ appears as $k_i$ the same number of times it appears as $k'_i$. Let $\rho_{t}^k=p_{t}^k/\pi_1^k$, $\rho_{t+d}^k=p_{t+d}^k/\pi_2^k$, and
\begin{equation*}
\mathpzc{T}_0(\omega)=\sum_{i=1}^{n}(t_i-t'_i)\text{ and }\mathpzc{P}_o(\omega)=\prod_{i=1}^n \frac{\rho_{t_i}^{k_i}}{\rho_{t'_i}^{k'_i}}.
\end{equation*}
The data set $\mathcal{O}$  satisfies SAR-ODEU if $\mathpzc{P}_o(\omega)\leq 1$ for any balanced sequence $\omega$ with $\mathpzc{T}_0(\omega) \geq 0$. Following the argument in Section~\ref{section: Proof of Theorem SAR-ODEU}, we can deduce that $\mathcal{O}$ is DEU rational if and only if it satisfies SAR-ODEU.

How to verify the condition SAR-ODEU? To answer this question, let $\mathds{T}=\{7, 7+28, 7+56\}$ and 
$$\sum=\left\{(k, t, k', t')\in \mathds{K}\times \mathds{T}\times \mathds{K}\times\mathds{T}: x_{t}^k>x_{t'}^{k'}\right\}.$$
Let $N$ be the cardinality of $\sum$. Construct a matrix $G_1$ of dimension $K\times N$, each element $\sigma$ in $\sum$ corresponding to a column. The entry in row $\hat{k}$ and column $\sigma=(k, t, k', t')$ is $1$ if $\hat{k}=k$, $-1$ if $\hat{k}=k'$, and $0$ otherwise. Construct a row vector $G_2$ of dimension $N$ such that its entry corresponding to $\sigma=(k, t, k', t')\in \sum$ is $t-t'$. Finally, let $\delta$ be another row vector of dimension $N$ whose entry corresponding to $\sigma=(k, t, k', t')\in \sum$ is $\ln \rho_{t}^{k}-\ln  \rho_{t'}^{k'}$. Then SAR-ODEU holds for $\mathcal{O}$ if and only if the optimal value of the following linear program is less than or equal to zero \citep[][Appendix C]{Echenique2015}:
$$\begin{cases}
  \max_v &\delta\cdot v\\
  \mbox{s. t.} &G_1 v=0\\
  & G_2\cdot v\geq 0,\\
  &\mathbf{e}\cdot v=1,\\
  &v\geq 0,
\end{cases}$$
where $\mathbf{e}\in \mathbb{R}^N$ is a vector of all ones and the third constraint serves as a normalization. We use the Matlab function {\sffamily linprog} to solve this problem.

Note that each CTB decision in AS's experiment takes the form of a tuple $\gamma=(d, a_t, c_t, \pi_1, \pi_2)$. To carry out the above test, we need to translate this tuple into a price-consumption pair. For the consumptions, let $x_{t}=a_t\cdot c_t$ and $x_{t+d}=a_{t+d}(100- c_t)$; for the prices, take the dollar in date $(t+d)$ as the numeraire, so that $p_{t+d}=1$ and $p_{t}=a_{t+d}/a_t$. In this way, we can transform the set $\Gamma$ of CTB decisions of each subject into the form of a data set $\mathcal{O}$, and then by applying the above test, we can determine its DEU rationality.

\end{appendices}

\bibliographystyle{plainnat}
\bibliography{library}
\end{document}